\theoremstyle{plain}
\numberwithin{equation}{section}
\newtheorem{thm}{Theorem}[section]
\newtheorem{lem}[thm]{Lemma}
\newtheorem{cor}[thm]{Corollary}
\newenvironment{conj}[1]
{\begin{flushleft}\textbf{Conjecture #1}.\enspace}%
{\end{flushleft}}
\newcommand{\tbullet}{\mathrel{\raise .4ex\hbox{\tiny$\bullet$}}} 
\newcommand{\rmtr}{\mathrm{tr\,}}
\newcommand{\inset}{\iscript n}
\newcommand{\obset}{\oscript b}
\newcommand{\escript}{\mathcal{E}}
\newcommand{\fscript}{\mathcal{F}}
\newcommand{\hscript}{\mathcal{H}}
\newcommand{\iscript}{\mathcal{I}}
\newcommand{\jscript}{\mathcal{J}}
\newcommand{\lscript}{\mathcal{L}}
\newcommand{\oscript}{\mathcal{O}}
\newcommand{\pscript}{\mathcal{P}}
\newcommand{\rscript}{\mathcal{R}}
\newcommand{\sscript}{\mathcal{S}}
\newcommand{\tscript}{\mathcal{T}}
\newcommand{\hscripthat}{\widehat{\hscript}}
\newcommand{\iscripthat}{\widehat{\iscript}}
\newcommand{\jscripthat}{\widehat{\jscript}}
\newcommand{\lscripthat}{\widehat{\lscript}}
\newcommand{\hhat}{\widehat{H}}
\newcommand{\jhat}{\widehat{J}}
\newcommand{\khat}{\widehat{K}}
\newcommand{\alphatilde}{\widetilde{\alpha}}
\newcommand{\hscriptbar}{\overline{\hscript}}
\newcommand{\iscriptbar}{\overline{\iscript}}
\newcommand{\jscriptbar}{\overline{\jscript}}
\newcommand{\lscriptbar}{\overline{\lscript}}
\newcommand{\ab}[1]{\left|#1\right|}
\newcommand{\brac}[1]{\left\{#1\right\}}
\newcommand{\paren}[1]{\left(#1\right)}
\newcommand{\sqbrac}[1]{\left[#1\right]}
\newcommand{\elbows}[1]{{\left\langle#1\right\rangle}}
\newcommand{\ket}[1]{{\left|#1\right>}}
\newcommand{\bra}[1]{{\left<#1\right|}}
\begin{document}

\title{DUAL INSTRUMENTS AND\\SEQUENTIAL PRODUCTS OF OBSERVABLES}
\author{Stan Gudder\\ Department of Mathematics\\
University of Denver\\ Denver, Colorado 80208\\
sgudder@du.edu}
\date{}
\maketitle

\begin{abstract}
We first show that every operation possesses an unique dual operation and measures an unique effect. If $a$ and $b$ are effects and $J$ is an operation that measures $a$, we define the sequential product of $a$ then $b$ relative to $J$. Properties of the sequential product are derived and are illustrated in terms of L\"uders and Holevo operations. We next extend this work to the theory of instruments and observables. We also define the concept of an instrument (observable) conditioned by another instrument (observable). Identity, state-constant and repeatable instruments are considered. Sequential products of finite observables relative to L\"uders and Holevo instruments are studied.
\end{abstract}

\section{Sequential Products of Effects}  
Let $S$ be a quantum system described by a complex Hilbert space $H$. One of the main points of this article is that the sequential product of two observables for $S$ depends on the instrument $\iscript$ employed to measure the first observable and is independent of the instrument used to measure the second. In this way, the measurement of the first observable influences the measurement of the second but not vice versa. As we shall see, the sequential product is defined in terms of the dual $\iscript ^*$ of $\iscript$.

We denote the set of bounded linear operators on $H$ by $\lscript (H)$ and the set of trace-class operators on $H$ by $\tscript (H)$. For
$A,B\in\lscript (H)$ we write $A\le B$ if $\elbows{\phi ,A\phi}\le\elbows{\phi ,B\phi}$ for all $\phi\in H$. We say that $A\in\lscript (H)$ is \textit{positive} if $A\ge 0$ and $A$ is an \textit{effect} if $0\le A\le I$ where $0,I$ are the zero and identity operators on $H$, respectively
\cite{dl70,hz12,lud51,nc00}. The set of effects on $H$ is denoted by $\escript (H)$. We interpret effects as measurements that have two possible outcomes, true and false. If $a\in\escript (H)$, then its \textit{complement} $a'=I-a$ is true if and only if $a$ is false. If $a,b\in\escript (H)$ and
$a+b\in (H)$ we write $a\perp b$ and interpret $a+b$ as the statistical sum of the measurements $a$ and $b$. Of course, $0\perp a$ for all
$a\in\escript (H)$ and we interpret $0$ as the effect that is always false. Similarly, $1\perp a$ if and only if $a=0$ and $1$ is the effect that is always true. Moreover, $b\perp a$ if and only if $b\le a'$. A map $K\colon\escript (H)\to\escript (H)$ is \textit{additive} if $K(a)\perp K(b)$ whenever $a\perp b$ and we have that $K(a+b)=K(a)+K(b)$. If $K$ is additive, then $K$ preserves order because if $a\le b$, then there exists a $c\in\escript (H)$ such that $a+c=b$ and we obtain
\begin{equation*}
K(a)\le K(a)+K(c)=K(a+c)=K(b)
\end{equation*}
If $K$ is additive and $K(I)=I$, then $K$ is a \textit{morphism} \cite{bgl95,dl70,gg02,gn01}.

A \textit{state} for $S$ is a positive operator $\rho\in\tscript (H)$ such that $\rmtr (\rho )=1$. We denote the set of states by $\sscript (H)$ and interpret $\rho\in\sscript (H)$ as an initial condition for the system $S$ \cite{bgl95,dl70}. We define the \textit{probibility that} $a\in\escript (H)$
\textit{is true} when $S$ is in the state $\rho$ by $\pscript _\rho (a)=\rmtr(\rho a)$. It follows that $\pscript _\rho (a')=1-\pscript _\rho (a)$ and
$a\le b$ if and only if $\pscript _\rho (a)\le\pscript _\rho (b)$ for all $\rho\in\sscript (H)$. An \textit{operation} on $H$ is a completely positive linear map $J\colon\lscript (H)\to\lscript (H)$ that is trace non-increasing for $\tscript (H)$ operators. We denote the set of operations on $H$ by
$\oscript (H)$. It can be shown \cite{bgl95,dl70,hz12,kra83} that every $J\in\oscript (H)$ has a \textit{Kraus decomposition}
$J(A)=\sum C_iAC_i^*$, $A\in\lscript (H)$, where $C_i\in\lscript (H)$ satisfy $\sum C_i^*C_i\le I$. This condition follows from the fact that for every $A\in\tscript (H)$ with $A\ge 0$ we have that
\begin{align*}
\rmtr\paren{\sum C_i^*C_iA}&=\sum\rmtr (C_i^*C_iA)=\sum\rmtr (C_iAC_i^*)\\
   &=\rmtr\paren{\sum C_iAC_i^*}\le\rmtr (A)
\end{align*}
holds if and only if $\sum C_i^*C_i\le I$. If an operation preserves the trace, it is called a \textit{channel} \cite{bgl95,dl70,gud320,nc00}. A
\textit{dual operation} on $H$ is a completely positive linear map $K\colon\lscript (H)\to\lscript (H)$ that satisfies
$K\colon\escript (H)\to\escript (H)$. It follows that $K|_{\escript (H)}$ is additive. We denote the set of dual operations on $H$ by
$\oscript ^*(H)$.
\begin{thm}    
\label{thm11}
If $J\colon\lscript (H)\to\lscript (H)$ is an operation, then there exists a unique $J^*\in\oscript ^*(H)$ such that
$\rmtr\sqbrac{\rho J^*(a)}=\rmtr\sqbrac{J(\rho )a}$ for all $a\in\escript (H)$, $\rho\in\sscript (H)$. Conversely, if $K\in\oscript ^*(H)$, then there exists a unique $J\in\oscript (H)$ such that $J^*=K$. Moreover, $J$ is a channel if and only if $J^*(I)=I$.
\end{thm}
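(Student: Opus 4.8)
The plan is to construct $J^*$ explicitly from a Kraus decomposition and verify the three assertions in turn, using throughout that $\escript(H)$ spans $\lscript(H)$ and that $\sscript(H)$ separates the points of $\lscript(H)$ through the trace pairing. First I would fix a decomposition $J(A)=\sum C_iAC_i^*$ with $\sum C_i^*C_i\le I$ and define $J^*(A)=\sum C_i^*AC_i$. The defining identity is then a direct computation using cyclicity of the trace:
\[
\rmtr\sqbrac{\rho J^*(a)}=\sum\rmtr (\rho C_i^*aC_i)=\sum\rmtr (C_i\rho C_i^*a)=\rmtr\sqbrac{J(\rho )a}.
\]
That $J^*\in\oscript^*(H)$ requires two checks. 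It is completely positive because it is itself presented in Kraus form, and it maps $\escript(H)$ into $\escript(H)$: if $0\le a\le I$ then $J^*(a)=\sum C_i^*aC_i\ge 0$, while positivity and linearity of $J^*$ give $J^*(I)-J^*(a)=J^*(I-a)\ge 0$, so $J^*(a)\le J^*(I)=\sum C_i^*C_i\le I$. This last step, where the Kraus bound $\sum C_i^*C_i\le I$ is invoked, is where the hypothesis that $J$ is trace non-increasing enters; I regard it as the conceptual crux that links the operation condition on $J$ to the effect-preservation condition on $J^*$.

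For uniqueness, suppose $K_1,K_2\in\oscript^*(H)$ both satisfy the defining identity. Then $\rmtr\sqbrac{\rho\paren{K_1(a)-K_2(a)}}=0$ for every $\rho\in\sscript(H)$; taking $\rho=\ket{\phi}\bra{\phi}$ for unit vectors $\phi$ and polarizing shows $K_1(a)=K_2(a)$ for all $a\in\escript(H)$, whence $K_1=K_2$ by linearity since $\escript(H)$ spans $\lscript(H)$. (Alternatively, existence and uniqueness of the operator $J^*(a)$ follow at once from the identification $\lscript(H)=\tscript(H)^*$, as $\rho\mapsto\rmtr\sqbrac{J(\rho)a}$ is a bounded functional represented by a unique bounded operator; I would mention this as the cleaner but less self-contained route.)

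For the converse, given $K\in\oscript^*(H)$ I would take a Kraus representation $K(A)=\sum D_iAD_i^*$ of the completely positive map $K$ and set $J(A)=\sum D_i^*AD_i$. Then $J$ is completely positive and $\sum\paren{D_i^*}^*D_i^*=\sum D_iD_i^*=K(I)\le I$, because $K(I)\in\escript(H)$; hence $J$ is trace non-increasing, so $J\in\oscript(H)$, and the computation above yields $J^*=K$. Uniqueness of $J$ is the dual separation argument: if $J_1^*=J_2^*=K$ then $\rmtr\sqbrac{\paren{J_1(\rho)-J_2(\rho)}a}=0$ for all $a\in\escript(H)$, so $J_1(\rho)=J_2(\rho)$ for every state $\rho$, and since $\sscript(H)$ spans $\tscript(H)$ this determines $J$ on the trace-class operators.

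Finally, $J$ is a channel precisely when $\rmtr\sqbrac{J(\rho)}=\rmtr(\rho)$ for all $\rho\in\sscript(H)$. Writing $\rmtr\sqbrac{J(\rho)}=\rmtr\sqbrac{J(\rho)I}=\rmtr\sqbrac{\rho J^*(I)}$, this equals $\rmtr(\rho)=\rmtr\sqbrac{\rho I}$ for every state iff $\rmtr\sqbrac{\rho\paren{J^*(I)-I}}=0$ for all $\rho$, i.e.\ $J^*(I)=I$, again by the separating property of $\sscript(H)$. This disposes of all three parts.
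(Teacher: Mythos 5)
Your proposal is correct and takes essentially the same route as the paper: both construct $J^*$ from a Kraus decomposition $J(A)=\sum C_iAC_i^*$ by setting $J^*(A)=\sum C_i^*AC_i$, verify the duality identity by cyclicity of the trace, obtain uniqueness from the separating property of states, prove the converse by flipping the Kraus operators of $K$ and using $K(I)\le I$ to get the trace non-increasing bound, and settle the channel criterion by the same two trace computations. The only cosmetic differences are that you check effect-preservation via $J^*(I-a)\ge 0$ rather than the paper's vector-state inequality chain, and you mention the $\lscript (H)=\tscript (H)^*$ duality as an alternative shortcut.
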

\begin{proof}
Let $J\in\oscript (H)$ with Kraus decomposition $J(A)=\sum C_iAC_i^*$ where $\sum C_i^*C_i\le I$ and define $J^*(A)=\sum C_i^*AC_i$ for all $A\in\lscript (H)$. If $a\in\escript (H)$ and $\phi\in H$, since $0\le a\le I$, we have that
\begin{align*}
\elbows{\phi ,J^*(a)\phi}&=\elbows{\phi ,\sum C_i^*aC_i\phi}=\sum\elbows{C_i\phi ,aC_i\phi}\le\sum\elbows{C_i\phi ,C_i\phi}\\
   &=\elbows{\phi ,\sum C_i^*C_i\phi}\le\elbows{\phi ,\phi}
\end{align*}
Moreover, $\elbows{\phi ,J^*(a)\phi}\ge 0$ so $0\le J^*(a)\le I$ and we conclude that $J^*(a)\in\escript (H)$. Since $J^*$ also has a Kraus decomposition, it follows that $J^*\in\oscript ^*(H)$. The duality condition holds because
\begin{align}                
\label{eq11}
\rmtr\sqbrac{\rho J^*(a)}&=\rmtr\paren{\rho\sum C_i^*aC_i}=\sum\rmtr(\rho C_i^*aC_i)=\sum\rmtr (C_i\rho C_i^*a)\notag\\
   &=\rmtr\sqbrac{\sum C_i\rho C_i^*a}=\rmtr\sqbrac{J(\rho )a}
\end{align}
for all $a\in\escript (H)$, $\rho\in\sscript (H)$. To show that $J^*$ is unique, suppose $K\in\oscript ^*(H)$ satisfies
$\rmtr\sqbrac{\rho K(a)}=\rmtr\sqbrac{J(\rho )a}$ for all $a\in\escript (H)$, $\rho\in\sscript (H)$. Then
$\rmtr\sqbrac{\rho K(a)}=\rmtr\sqbrac{\rho J^*(a)}$ for all $a\in\escript (H)$, $\rho\in\sscript (H)$ so $K=J^*$. Conversely, let $K\in\oscript ^*(H)$ with Kraus decomposition $K(a)=\sum C_i^*aC_i$. Since $K\colon\escript (H)\to\escript (H)$ and $I\in\escript (H)$ we have that $K(I)\le I$. Hence,
\begin{equation*}
\sum C_i^*C_i=K(I)\le I
\end{equation*}
It follows that the map $J(A)=\sum C_iAC_i^*$ is an operation. As in \eqref{eq11} we have that
\begin{equation*}
\rmtr\sqbrac{\rho K(a)}=\rmtr\sqbrac{J(\rho )a}=\rmtr\sqbrac{\rho J^*(a)}
\end{equation*}
We conclude that $J^*=K$ and as before, $J$ is unique. If $J^*(I)=I$, then
\begin{equation*}
\rmtr\sqbrac{J(\rho )}=\rmtr\sqbrac{J(\rho )I}=\rmtr\sqbrac{\rho J^*(I)}=\rmtr (\rho )=1
\end{equation*}
for every $\rho\in\sscript (H)$ so $J$ is a channel. Conversely, if $J$ is a channel, then
\begin{equation*}
\rmtr\sqbrac{\rho J^*(I)}=\rmtr\sqbrac{J(\rho )I}=\rmtr\sqbrac{J(\rho )}=1
\end{equation*}
so $J^*(I)=I$.
\end{proof}

In the proof of Theorem~\ref{thm11}, we defined $J^*(A)=\sum C_i^*AC_i$, where $J$ has Kraus decomposition $J(A)=\sum C_iAC_i^*$. The Kraus operators $C_i$ are not unique and there can be many such operators \cite{hz12,nc00}. Suppose we have another Kraus decomposition $J(A)=\sum D_jAD_j^*$. By uniqueness, we conclude that $J^*(A)=\sum D_j^*AD_j$ so the form of the Kraus operators is immaterial. We say that an operation $J$ \textit{measures} an effect $a$ if 
\begin{equation*}
\rmtr\sqbrac{J(\rho )}=\rmtr (\rho a)=\pscript _\rho (a)
\end{equation*}
for every $\rho\in\sscript (H)$. We think of $J$ as an apparatus that can be employed to measure the effect $a$ \cite{gud120,gud220,gud320,gud21}. Then $\rmtr\sqbrac{J(\rho )}$ gives the probability that $a$ is true when the system $S$ is in the state
$\rho$. The operation $J$ gives more information than the effect $a$. If $\alpha\in\tscript (H)$ with $\alpha >0$ we define its corresponding state to be $\alphatilde =\alpha/\rmtr (\alpha )$. After an operation $J$ is performed, the state $\rho$ is \textit{updated} to the state
$(J\rho )^\sim$ \cite{gud120,gud220,gud320,gud21}. 

If $0\le\lambda _i\le 1$ with $\sum\lambda _i=1$ and $a_i\in\escript (H)$, then it is clear that $\sum\lambda _ia_i\in\escript (H)$ and if
$\rho _i\in\sscript (H)$ we have that $\sum\lambda _i\rho _i\in\sscript (H)$. We conclude that $\escript (H)$ and $\sscript (H)$ are closed under convex combinations and hence form convex sets. In a similar way, if $J_i\in\oscript (H)$, $K_i\in\iscript ^*(H)$, then
$\sum\lambda _iJ_i\in\oscript (H)$ and $\sum\lambda _iK_i\in\oscript ^*(H)$ so $\oscript (H)$ and $\oscript ^*(H)$ form convex sets. If
$J_1,J_2\in\oscript (H)$, we define their \textit{sequential product} $J_1\circ J_2\in\oscript (H)$ by $J_1\circ J_2(A)=J_2\paren{J_1(A)}$
\cite{gud120,gud220,gud320,gud21}. Physically, $J_1\circ J_2$ specifies the operation obtained by first employing the operation $J_1$ and then employing $J_2$. In a similar way, if $K_1,K_2\in\oscript ^*(H)$, their \textit{sequential product} $K_1\circ K_2\in\oscript ^*(H)$ is
$K_1\circ K_2(A)=K_2\paren{K_1(A)}$.

\begin{thm}    
\label{thm12}
{\rm{(i)}}\enspace An operation $J$ measures an unique effect given by $\jhat =J^*(I)$.
{\rm{(ii)}}\enspace If $0\le\lambda _i\le 1$ with $\sum\lambda _i=1$ and $J_i\in\oscript (H)$, then
$\paren{\sum\lambda _iJ_i}^*=\sum\lambda _iJ_i^*$ and $\paren{\sum\lambda _iJ_i}^\wedge =\sum\lambda _i\jhat _i$.
{\rm{(iii)}}\enspace If we also have $0\le\mu _j\le 1$ with $\sum\mu _j=1$ and $K_j\in\oscript (H)$ then
\begin{equation*}
\paren{\sum _i\lambda J_i}\circ\paren{\sum _j\mu _j K_j}=\sum _{i,j}\lambda _i\mu _jJ_i\circ K_j
\end{equation*}
and this result also holds if $J_i,K_j\in\oscript ^*(H)$.
{\rm{(iv)}}\enspace If $J,K\in\oscript (H)$ then $(J\circ K)^*=K^*\circ J^*$ and $(J\circ K)^\wedge =J^*(\khat\,)$.
{\rm{(v)}}\enspace The following statements are equivalent:
{\rm{(a)}}\enspace $J$ is a channel,
{\rm{(b)}}\enspace $\jhat =I$,
{\rm{(c)}}\enspace $J^*(I)=I$,
{\rm{(d)}}\enspace $(K\circ J)^\wedge=\khat$ for all $K\in\oscript (H)$.
\end{thm}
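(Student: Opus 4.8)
The plan is to reduce every part to two ingredients that are already available: the duality identity $\rmtr\sqbrac{\rho J^*(a)}=\rmtr\sqbrac{J(\rho )a}$ from Theorem~\ref{thm11}, and the non-degeneracy of the trace pairing, namely that $\rmtr\sqbrac{\rho X}=\rmtr\sqbrac{\rho Y}$ for all $\rho\in\sscript (H)$ forces $X=Y$ (taking $\rho =\ket{\phi}\bra{\phi}$ gives $\elbows{\phi ,(X-Y)\phi}=0$ for every $\phi$, hence $X=Y$). This is precisely the uniqueness device already used in the proof of Theorem~\ref{thm11}, so I would state it once at the outset and then let each subsequent part be a one-line computation followed by an appeal to it.

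For (i), I would set $a=I$ in the duality identity to obtain $\rmtr\sqbrac{J(\rho )}=\rmtr\sqbrac{J(\rho )I}=\rmtr\sqbrac{\rho J^*(I)}$ for every $\rho$, which says exactly that $J$ measures $J^*(I)=\jhat$; that $\jhat\in\escript (H)$ is immediate since $J^*\in\oscript ^*(H)$ and $I\in\escript (H)$. Uniqueness follows because two measured effects $a,b$ satisfy $\rmtr\sqbrac{\rho a}=\rmtr\sqbrac{J(\rho )}=\rmtr\sqbrac{\rho b}$ for all $\rho$, whence $a=b$ by non-degeneracy. For (ii), I would expand $\rmtr\sqbrac{\rho\paren{\sum\lambda _iJ_i}^*(a)}=\rmtr\sqbrac{\paren{\sum\lambda _iJ_i}(\rho )a}=\sum\lambda _i\rmtr\sqbrac{J_i(\rho )a}=\rmtr\sqbrac{\rho\sum\lambda _iJ_i^*(a)}$ using linearity and the duality identity, then invoke non-degeneracy to conclude $\paren{\sum\lambda _iJ_i}^*=\sum\lambda _iJ_i^*$; evaluating both sides at $I$ yields $\paren{\sum\lambda _iJ_i}^\wedge =\sum\lambda _i\jhat _i$.

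Part (iii) is pure bilinearity: applying the definition $J_1\circ J_2(A)=J_2\paren{J_1(A)}$ together with the linearity of each $K_j$, one expands $\paren{\sum _i\lambda _iJ_i}\circ\paren{\sum _j\mu _jK_j}(A)=\sum _j\mu _jK_j\paren{\sum _i\lambda _iJ_i(A)}=\sum _{i,j}\lambda _i\mu _j\,J_i\circ K_j(A)$, and the identical expansion is valid verbatim for dual operations since only linearity is used. Part (iv) uses the duality identity twice: $\rmtr\sqbrac{\rho (J\circ K)^*(a)}=\rmtr\sqbrac{K\paren{J(\rho )}a}=\rmtr\sqbrac{J(\rho )K^*(a)}=\rmtr\sqbrac{\rho J^*\paren{K^*(a)}}=\rmtr\sqbrac{\rho (K^*\circ J^*)(a)}$, so $(J\circ K)^*=K^*\circ J^*$ by non-degeneracy; evaluating at $I$ then gives $(J\circ K)^\wedge =(K^*\circ J^*)(I)=J^*\paren{K^*(I)}=J^*(\khat\,)$.

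Finally, for (v) I would close a short chain of implications. The equivalence (a)$\Leftrightarrow$(c) is already contained in the last assertion of Theorem~\ref{thm11}, and (b)$\Leftrightarrow$(c) is just the definition $\jhat =J^*(I)$. For (b)$\Rightarrow$(d) I would apply (iv) in the form $(K\circ J)^\wedge =K^*(\jhat\,)=K^*(I)=\khat$. The one step requiring a small idea, and the only place I anticipate any friction, is (d)$\Rightarrow$(b): here I would specialize the hypothesis by taking $K$ to be the identity operation $K(A)=A$, for which $K\circ J=J$ and $\khat =K^*(I)=I$, so that $\jhat =(K\circ J)^\wedge =\khat =I$. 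Every other step is a mechanical consequence of the duality identity and non-degeneracy, so I expect the whole proof to be brief once this framing is in place.
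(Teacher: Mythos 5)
Your proposal is correct and follows essentially the same route as the paper: every part reduces to the duality identity $\rmtr\sqbrac{\rho J^*(a)}=\rmtr\sqbrac{J(\rho )a}$ plus non-degeneracy of the trace pairing, and your computations in (i)--(iv) match the paper's line for line. The only difference is organizational, in (v): the paper proves the cycle (a)$\Rightarrow$(b)$\Leftrightarrow$(c)$\Rightarrow$(d)$\Rightarrow$(a), re-deriving the channel property by a trace computation at the end, whereas you close the loop as (a)$\Leftrightarrow$(c)$\Leftrightarrow$(b), (b)$\Rightarrow$(d), (d)$\Rightarrow$(b) by citing the last assertion of Theorem~\ref{thm11} -- both use the same key specialization of $K$ to the identity channel, so this is a cosmetic rearrangement rather than a different argument.
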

\begin{proof}
(i)\enspace Since
\begin{equation*}
\rmtr\sqbrac{J(\rho )}=\rmtr\sqbrac{J(\rho )I}=\rmtr\sqbrac{\rho J^*(I)}
\end{equation*}
for all $\rho\in\sscript (H)$, we conclude that $J$ measures $J^*(I)$. For uniqueness, if $J$ also measures $a$, then
\begin{equation*}
\rmtr (\rho a)=\rmtr\sqbrac{J(\rho )}=\rmtr\sqbrac{\rho J^*(I)}
\end{equation*}
for all $\rho\in\sscript (H)$ so $a=J^*(I)$.
(ii)\enspace Since
\begin{align*}
\rmtr\sqbrac{\rho\paren{\sum\lambda _iJ_i}^*(a)}&=\rmtr\sqbrac{\paren{\sum\lambda _iJ_i}(\rho )a}=\rmtr\sqbrac{\sum\lambda _iJ_i(\rho )a}\\
   &=\sum\lambda _i\rmtr\sqbrac{J_i(\rho )a}=\sum\lambda _i\rmtr\sqbrac{\rho J_i^*(a)}\\
   &=\rmtr\sqbrac{\rho\sum\lambda _iJ_i^*(a)}
\end{align*}
 for all $\rho\in\sscript (H)$, $a\in\escript (H)$, it follows that $\paren{\sum\lambda _iJ_i}^*=\sum\lambda _iJ_i^*$. We then obtain
 \begin{equation*}
 \paren{\sum\lambda _iJ_i}^\wedge =\paren{\sum\lambda _iJ_i}^*(I)=\sum\lambda _iJ_i^*(I)=\sum\lambda _i\jhat _i
\end{equation*}
which gives the result.
(iii)\enspace For all $A\in\lscript (H)$ we obtain
\begin{align*}
\paren{\sum _i\lambda _iJ_i}\circ\paren{\sum _j\mu _jK_j}&=\sum _j\mu _jK_j\paren{\sum _i\lambda _iJ_i(A)}\\
   &=\sum _{i,j}\lambda _i\mu _jK_j\paren{J_i(A)}=\sum _{i,j}\lambda _i\mu _jJ_i\circ K_j(A)
\end{align*}
and the result follows. The proof for $J_i,K_j\in\oscript ^*(H)$ is similar.
(iv)\enspace For all $\rho\in\sscript (H)$, $a\in\escript (H)$ we have that
\begin{align*}
\rmtr\sqbrac{\rho (J\circ K^*)(a)}&=\rmtr\sqbrac{(J\circ K)(\rho )a}=\rmtr\sqbrac{K\paren{J(\rho )}a}=\rmtr\sqbrac{J(\rho )K^*(a)}\\
    &=\rmtr\sqbrac{\rho J^*\paren{K^*(a)}}=\rmtr\sqbrac{\rho (K^*\circ J^*)(a)}
\end{align*}
Hence, $(J\circ K)^*=K^*\circ J^*$. It then follows from (i) that
\begin{equation*}
(J\circ K)^\wedge=(J\circ K)^*(I)=(K^*\circ J^*)(I)=J^*\paren{K^*(I)}=J^*(\khat\,)
\end{equation*}
(v)\enspace (a)$\Mapsto$(b)\enspace If $J$ is a channel, then for every $\rho\in\sscript (H)$ we have that
\begin{equation*}
\rmtr (\rho I)=1=\rmtr\sqbrac{J(\rho )I}=\rmtr\sqbrac{\rho J^*(I)}=\rmtr (\rho\jhat\,)
\end{equation*}
Hence, $\jhat =I$. 
(b)$\Leftrightarrow$(c)\enspace This follows from (i).
(c)$\Mapsto$(d)\enspace If $\jhat =J^*(I)=I$, applying (i) and (iv) gives
\begin{equation*}
(K\circ J)^\wedge =(K\circ J)^*(I)=(J^*\circ K^*)(I)=K^*\paren{J^*(I)}=K^*(\jhat\,)=K^*(I)=\khat
\end{equation*}
(d)$\Mapsto$(a)\enspace Suppose (d) holds and let $K$ be the identity channel $K(\rho )=\rho$ for all $\rho\in\sscript (H)$. Then $\khat =I$ so by (d) we have that
\begin{equation*}
\jhat =(K\circ J)^\wedge =\khat =I
\end{equation*}
We then obtain for all $\rho\in\sscript (H)$ that
\begin{equation*}
\rmtr\sqbrac{J(\rho )}=\rmtr\sqbrac{J(\rho )I}=\rmtr\sqbrac{\rho J^*(I)}=\rmtr (\rho\jhat\,)=\rmtr (\rho )=1
\end{equation*}
Hence, $J$ is a channel.
\end{proof}

The proof of the following result is similar to Theorem~\ref{thm12}(ii)

\begin{cor}    
\label{cor13}
If $J,K\in\oscript (H)$ and $J+K\in\oscript (H)$, then $(J+K)^*=J^*+K^*$ and $(J+K)^\wedge =\jhat +\khat$.
\end{cor}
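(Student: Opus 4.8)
The plan is to mirror the argument of Theorem~\ref{thm12}(ii), replacing the convex combination $\sum\lambda_iJ_i$ by the sum $J+K$. The hypothesis $J+K\in\oscript(H)$ guarantees, via Theorem~\ref{thm11}, that $(J+K)^*$ exists in $\oscript^*(H)$ and is the \emph{unique} element of $\oscript^*(H)$ obeying the duality relation $\rmtr\sqbrac{\rho(J+K)^*(a)}=\rmtr\sqbrac{(J+K)(\rho)a}$ for all $\rho\in\sscript(H)$, $a\in\escript(H)$. My strategy is therefore to exhibit $J^*+K^*$ as a member of $\oscript^*(H)$ satisfying this same relation, so that uniqueness forces $(J+K)^*=J^*+K^*$.

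For the main computation I would fix $\rho\in\sscript(H)$ and $a\in\escript(H)$ and expand
\begin{align*}
\rmtr\sqbrac{(J+K)(\rho)a}&=\rmtr\sqbrac{\paren{J(\rho)+K(\rho)}a}=\rmtr\sqbrac{J(\rho)a}+\rmtr\sqbrac{K(\rho)a}\\
&=\rmtr\sqbrac{\rho J^*(a)}+\rmtr\sqbrac{\rho K^*(a)}=\rmtr\sqbrac{\rho\paren{J^*+K^*}(a)},
\end{align*}
using linearity of $J$, $K$ and of the trace together with the defining dualities of $J^*$ and $K^*$ from Theorem~\ref{thm11}. This is exactly the duality relation for $J+K$ with candidate dual $J^*+K^*$.

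Before invoking uniqueness I must confirm $J^*+K^*\in\oscript^*(H)$. Complete positivity is immediate, being inherited from the summands, so the substantive point is that $J^*+K^*$ carries $\escript(H)$ into $\escript(H)$, and this is precisely where the operation hypothesis enters: for every $\rho\in\sscript(H)$,
\begin{equation*}
\rmtr\sqbrac{\rho\paren{\jhat+\khat}}=\rmtr\sqbrac{J(\rho)}+\rmtr\sqbrac{K(\rho)}=\rmtr\sqbrac{(J+K)(\rho)}\le\rmtr(\rho)=1,
\end{equation*}
since $J+K$ is trace non-increasing, whence $\paren{J^*+K^*}(I)=\jhat+\khat\le I$; positivity of $J^*+K^*$ then gives $0\le\paren{J^*+K^*}(a)\le\paren{J^*+K^*}(I)\le I$ for each $a\in\escript(H)$. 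With $J^*+K^*\in\oscript^*(H)$ in hand, the uniqueness clause of Theorem~\ref{thm11} forces $(J+K)^*=J^*+K^*$, and the second identity is then immediate from Theorem~\ref{thm12}(i): $(J+K)^\wedge=(J+K)^*(I)=\jhat+\khat$. I expect this effect-preservation check to be the only genuine obstacle, for---unlike the convex-combination setting of Theorem~\ref{thm12}(ii), where the candidate dual is automatically in $\oscript^*(H)$---the bound $\jhat+\khat\le I$ is not free and rests squarely on the standing assumption that $J+K$ is itself an operation. (One could also bypass the check entirely by merging Kraus decompositions of $J$ and $K$ into one for $J+K$ and applying the prescription of Theorem~\ref{thm11} to the combined operators, which produces $J^*+K^*$ directly.)
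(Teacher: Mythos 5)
Your proof is correct and follows essentially the same route as the paper, which simply notes that the argument of Theorem~\ref{thm12}(ii) carries over: expand $\rmtr\sqbrac{(J+K)(\rho)a}$ by linearity, apply the dualities of $J^*$ and $K^*$, and conclude $(J+K)^*=J^*+K^*$, whence $(J+K)^\wedge=(J+K)^*(I)=\jhat+\khat$. Your additional verification that $J^*+K^*$ lies in $\oscript^*(H)$ (using that $J+K$ is trace non-increasing to get $\jhat+\khat\le I$) is a sound piece of extra care that the paper leaves implicit.
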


If $a,b\in\escript (H)$ and $J\in\oscript (H)$ measures $a$ so that $\jhat =a$, we define the \textit{sequential product of} $a$ \textit{then}
$b$ \textit{relative to} $J$ by $a=\sqbrac{J}b=J^*(b)$. We interpret $a\sqbrac{J}b$ as the effect that results from first measuring $a$ with the operation $J$ and then measuring $b$. In this way, the measurement of $a$ can influence (or interfere with) $b$, but since we measure $b$ second, the measurement of $b$ does not influence $a$. An important point is that $a\sqbrac{J}b$ depends on $J$. As we shall see, there are many operations that measure $a$ so if $\khat =a$, then $a\sqbrac{J}b\ne a\sqbrac{K}b$, in general. Moreover, $a\sqbrac{J}b$ does not depend on an operation that measures $b$.

\begin{thm}    
\label{thm14}
{\rm{(i)}}\enspace If $\jhat =a$, $\khat =b$, then $a\sqbrac{J}b=(J\circ K)^\wedge$.
{\rm{(ii)}}\enspace $a\sqbrac{J}b\le a$ for all $a,b\in\escript (H)$.
{\rm{(iii)}}\enspace If $0\le\lambda\le 1$ and $\jhat =a$, then
$ (\lambda a)\sqbrac{\lambda J}b=\lambda\paren{a\sqbrac{J}b}=a\sqbrac{J}(\lambda b)$. 
{\rm{(iv)}}\enspace $a\sqbrac{J}I=a$ for all $a\in\escript (H)$.
{\rm{(v)}}\enspace $a\sqbrac{J}b'=a-J^*(a)$.
{\rm{(vi)}}\enspace If $\jhat _i=a_i$, $0\le\lambda _i\le 1$ with $\sum\lambda _i=1$ and $0\le\mu _j\le 1$ with $\sum\mu _j=1$, then for any $b_i\in\escript (H)$ we have that
\begin{equation*}
\paren{\sum\lambda _ia_i}\sqbrac{\sum\lambda _iJ_i}\paren{\sum\mu _jb_j}=\sum _{i,j}\lambda _i\mu _ja_i\sqbrac{J_i}b_j
\end{equation*}
\end{thm}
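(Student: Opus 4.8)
The plan is to reduce each of (i)--(vi) to the defining identity $a\sqbrac{J}b=J^*(b)$ (available because $\jhat=a$), together with the linearity of $J^*$---recall $J^*$ is linear by definition, and explicitly $J^*(A)=\sum C_i^*AC_i$---and two facts already proved: Theorem~\ref{thm12}(ii), giving $\paren{\sum\lambda_iJ_i}^*=\sum\lambda_iJ_i^*$ and $\paren{\sum\lambda_iJ_i}^\wedge=\sum\lambda_i\jhat_i$, and Theorem~\ref{thm12}(iv), giving $(J\circ K)^\wedge=J^*(\khat\,)$. I also use that $J^*$ restricts to an additive, hence order preserving, map on $\escript(H)$.

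Parts (i), (ii) and (iv) are then immediate. For (i), Theorem~\ref{thm12}(iv) with $\khat=b$ gives $(J\circ K)^\wedge=J^*(\khat\,)=J^*(b)=a\sqbrac{J}b$. For (ii), order preservation applied to $b\le I$ yields $a\sqbrac{J}b=J^*(b)\le J^*(I)=\jhat=a$. For (iv), $a\sqbrac{J}I=J^*(I)=\jhat=a$. Part (iii) requires one preliminary check, namely that the symbol $(\lambda a)\sqbrac{\lambda J}b$ is even defined: this needs $\lambda J$ to measure $\lambda a$, which holds since $(\lambda J)^*=\lambda J^*$ forces $(\lambda J)^\wedge=\lambda\jhat=\lambda a$. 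The two asserted equalities then reduce to $(\lambda J)^*(b)=\lambda J^*(b)$ and $J^*(\lambda b)=\lambda J^*(b)$, both instances of linearity.

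Part (v) is again linearity: $a\sqbrac{J}b'=J^*(I-b)=J^*(I)-J^*(b)=a-J^*(b)=a-a\sqbrac{J}b$, and setting $b=a$ produces the displayed $a-J^*(a)$. Part (vi) is the one computation with moving parts. First I confirm via Theorem~\ref{thm12}(ii) that $\sum\lambda_iJ_i$ measures $\sum\lambda_ia_i$, so the left side is well defined; then
\begin{align*}
\paren{\sum\lambda_ia_i}\sqbrac{\sum\lambda_iJ_i}\paren{\sum\mu_jb_j}
   &=\paren{\sum\lambda_iJ_i}^*\paren{\sum\mu_jb_j}\\
   &=\paren{\sum_i\lambda_iJ_i^*}\paren{\sum_j\mu_jb_j}\\
   &=\sum_{i,j}\lambda_i\mu_jJ_i^*(b_j)\\
   &=\sum_{i,j}\lambda_i\mu_ja_i\sqbrac{J_i}b_j,
\end{align*}
where the outer equalities are the defining identity, the second is Theorem~\ref{thm12}(ii), and the third is linearity of each $J_i^*$.

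I expect no genuine obstacle, since the whole statement unwinds to linearity and the earlier duality results. The only points needing care are the well-definedness checks in (iii) and (vi): the bracket notation $a\sqbrac{J}b$ silently presupposes $\jhat=a$, so before applying the defining identity to a scaled or averaged operation one must first discharge that the scaled or averaged operation measures the corresponding scaled or averaged effect, which is exactly the role of Theorem~\ref{thm12}(ii). I would also flag an apparent misprint in (v): the right-hand side $a-J^*(a)$ matches $a\sqbrac{J}a'$, whereas the general identity for $a\sqbrac{J}b'$ is $a-J^*(b)=a-a\sqbrac{J}b$.
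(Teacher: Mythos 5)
Your proposal is correct and takes essentially the same route as the paper's proof: each part is unwound to the defining identity $a\sqbrac{J}b=J^*(b)$ together with linearity/order-preservation of $J^*$ and Theorem~\ref{thm12}(ii),(iv), exactly as the paper does (your extra well-definedness checks in (iii) and (vi) are points the paper passes over silently). Your reading of (v) as a misprint is confirmed by the paper's own proof, which in fact computes $a\sqbrac{J}a'=a-J^*(a)$ rather than $a\sqbrac{J}b'$.
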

\begin{proof}
(i)\enspace By Theorem~\ref{thm12}(iv) we obtain
\begin{equation*}
a\sqbrac{J}b=J^*(b)=J^*(\khat\,)=(J\circ K)^\wedge
\end{equation*}
(ii)\enspace This follows from
\begin{equation*}
a\sqbrac{J}b=J^*(b)\le J^*(I)=\jhat = a
\end{equation*}
(iii)\enspace We have that
\begin{align*}
(\lambda a)\sqbrac{\lambda J}b=(\lambda J)^*(b)=\lambda J^*(b)=\lambda\paren{a\sqbrac{J}b}\\
\intertext{and}
\lambda J^*(b)=J^*(\lambda b)=a\sqbrac{J}(\lambda b)
\end{align*}
(iv)\enspace This follows from
\begin{equation*}
a\sqbrac{J}I=J^*(I)=\jhat =a
\end{equation*}
(v)\enspace We have that
\begin{equation*}
a\sqbrac{J}a'=a\sqbrac{J}(I-a)=J^*(I-a)=J^*(I)-J^*(a)=\jhat -\jhat (a)=a-J^*(a)
\end{equation*}
(vi)\enspace Applying Theorem~\ref{thm12}(ii) we obtain
\begin{align*}
\paren{\sum\lambda _ia_i}\sqbrac{\sum\lambda _iJ_i}\paren{\sum\mu _jb_j}&=\paren{\sum\lambda _iJ_i}^*\paren{\sum\mu _jb_j}\\
   &=\sum\lambda _jJ_i^*\paren{\sum\mu _jb_j}=\sum _{i,j}\lambda _i\mu _jJ_i^*(b_j)\\
   &=\sum _{i,j}\lambda _i\mu _ja_i\sqbrac{J_i}b_j
\end{align*}
\end{proof}

We end this section with some definitions suggested by the theory. We say that $a,b\in\escript (H)$ \textit{commute relative} to a subset
$\rscript\subseteq\oscript (H)$ if there exist operations $J,K\in\rscript$ such that
\begin{equation}                
\label{eq12}
a\sqbrac{J}b=b\sqbrac{K}a
\end{equation}
Of course, \eqref{eq12} is equivalent to $J^*(b)=K^*(a)$, where $\jhat =a$ and $\khat =b$. In particular, when \eqref{eq12} holds, then $a$ and $b$ commute relative to $\brac{J,K}\subseteq\oscript (H)$. When $\rscript =\oscript (H)$, we just say that $a$ and $b$ \textit{commute}. Since
\begin{equation*}
a\sqbrac{J}0=J^*(0)=0=0^*(a)=0\sqbrac{0}a
\end{equation*}
we conclude that any $a\in\escript (H)$ commutes with $0$. Similarly,
\begin{equation*}
a\sqbrac{J}I=J^*(I)=a=I^*(a)=I\sqbrac{I}a
\end{equation*}
So any $a\in\escript (H)$ commutes with $I$. Suppose $a$ commutes with $b$ so \eqref{eq12} holds. If $0\le\lambda\le 1$, then
\begin{equation*}
a\sqbrac{J}(\lambda b)=\lambda b\sqbrac{\lambda K}a
\end{equation*}
Hence, $a$ commutes with $b$. We do not know if the following conjecture holds.

\begin{conj}{1}  
If $a$ commutes with $b$ and $c$ where $b\perp c$, then $a$ commutes with $b+c$.
\end{conj}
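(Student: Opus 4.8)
The plan is to reduce the conjecture to the problem of finding a single operation that measures $a$ and simultaneously witnesses its commutation with both $b$ and $c$. Unfolding the definition in \eqref{eq12}, the hypothesis that $a$ commutes with $b$ furnishes operations $J_1,K_1\in\oscript(H)$ with $\jhat_1=a$, $\khat_1=b$, and $J_1^*(b)=K_1^*(a)$; likewise, $a$ commutes with $c$ gives $J_2,K_2\in\oscript(H)$ with $\jhat_2=a$, $\khat_2=c$, and $J_2^*(c)=K_2^*(a)$. To conclude that $a$ commutes with $b+c$ I must produce $M,N\in\oscript(H)$ with $\widehat M=a$, $\widehat N=b+c$, and $M^*(b+c)=N^*(a)$.

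First I would dispose of the second operation. Because $b\perp c$, the effect $b+c$ lies in $\escript(H)$, and concatenating Kraus families shows that $K_1+K_2\in\oscript(H)$, whence Corollary~\ref{cor13} gives $(K_1+K_2)^\wedge=\khat_1+\khat_2=b+c$. Hence, if one could arrange $J_1=J_2=:J$, the conjecture would follow immediately: taking $M=J$ and $N=K_1+K_2$ and using the additivity of duals (Corollary~\ref{cor13}) gives
\[
M^*(b+c)=J^*(b)+J^*(c)=K_1^*(a)+K_2^*(a)=(K_1+K_2)^*(a)=N^*(a).
\]
Thus the entire difficulty is concentrated in a single point: whether commutation with $b$ and commutation with $c$ can be witnessed by a \emph{common} first operation measuring $a$.

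This is where I expect the main obstacle to lie, and it is genuine. The set $\oscript_a=\brac{J\in\oscript(H):\jhat=a}$ is convex by Theorem~\ref{thm12}(ii), and $J\mapsto J^*(b)$, $J\mapsto J^*(c)$ are affine on it, so a natural attempt is the average $M=\tfrac12(J_1+J_2)\in\oscript_a$. But then $M^*(b+c)$ acquires the cross terms $J_1^*(c)$ and $J_2^*(b)$, which are effects below $a$ with no evident realization as $K^*(a)$ for an operation $K$ measuring $c$ or $b$; unlike the $K_i$, the operations $J_i$ cannot simply be summed, since $\jhat_1+\jhat_2=2a$ need not be an effect. I would therefore try to exhibit a common witness directly, exploiting the convex, compact geometry of $\oscript_a$ together with a separation or fixed-point argument, or by restricting attention to the explicit L\"uders and Holevo families, where the first operation is governed by a single parameter and $J^*(b)$, $J^*(c)$ are automatically tied to it.

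Finally, I would narrow the search by a reduction that already settles a large regime. By Theorem~\ref{thm14}(ii) any witnessing overlap $d=M^*(b+c)=N^*(a)$ satisfies $d\le a$ and $d\le b+c$. Using a Holevo operation for $M$ with state supported in $\ker(b+c)$ forces $M^*(b+c)=0$, and a Holevo operation for $N$ with state supported in $\ker a$ forces $N^*(a)=0$; hence $a$ commutes with $b+c$ \emph{unconditionally} whenever both $a$ and $b+c$ have nontrivial kernel. The hypotheses on $b$ and $c$ are therefore only needed when $a$ or $b+c$ is invertible, so that the overlap $d$ cannot be taken to be $0$. The crux is to control this nonzero overlap simultaneously for $b$ and $c$; I would expect a counterexample, should the conjecture fail, to be found precisely in this invertible case, so constructing explicit instances on $H=\complex^2$ or $\complex^3$ would be the prudent first diagnostic.
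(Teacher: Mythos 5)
The statement you were asked to prove is not proved in the paper at all: it appears as Conjecture~1 immediately after the sentence ``We do not know if the following conjecture holds,'' so there is no paper proof to compare against, and your attempt---by your own admission---does not close the question either. What you do establish is correct as far as it goes. Since $b\perp c$, the sum $K_1+K_2$ is indeed an operation (its Kraus operators satisfy $\sum C_i^*C_i+\sum D_j^*D_j=b+c\le I$) with $(K_1+K_2)^\wedge =b+c$ by Corollary~\ref{cor13}, and if a \emph{single} $J$ with $\jhat =a$ satisfied both $J^*(b)=K_1^*(a)$ and $J^*(c)=K_2^*(a)$, then $M=J$, $N=K_1+K_2$ would witness commutation of $a$ with $b+c$. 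Your Holevo observation is also correct and consistent with Theorem~\ref{thm22}(vi): if unit vectors $\phi ,\psi$ exist with $(b+c)\phi =0$ and $a\psi =0$, then $M=H_{(\ket{\phi}\bra{\phi},a)}$ and $N=H_{(\ket{\psi}\bra{\psi},b+c)}$ give $M^*(b+c)=0=N^*(a)$, so commutation holds in that regime without using the hypothesis at all.

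The genuine gap is the one you flag yourself and then leave unresolved: the hypothesis only supplies two possibly distinct witnesses $J_1\ne J_2$ measuring $a$, and nothing merges them into a common one. The paper emphasizes that $a\sqbrac{J}b$ depends essentially on the operation $J$ chosen to measure $a$, so the cross terms $J_1^*(c)$ and $J_2^*(b)$ arising in your averaging attempt are genuinely uncontrolled: they lie below $a$ by Theorem~\ref{thm14}(ii), but there is no reason they should equal $K^*(a)$ for any operation $K$ measuring $c$ or $b$. Convexity of the set of operations measuring $a$, together with affinity of $J\mapsto J^*$, does not by itself yield any fixed-point or separation argument producing a common witness, and you do not supply one. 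So what you have is a correct conditional reduction plus a correct special case, not a proof; the conjecture remains open exactly where you locate the difficulty, namely when $a$ or $b+c$ has trivial kernel (note that in infinite dimensions the dividing line is injectivity, not invertibility), and your suggestion to search for low-dimensional counterexamples in that regime is a reasonable next step rather than a conclusion.
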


Let $a,b\in\escript (H)$ and let $J,K\in\oscript (H)$ with $\jhat =a$, $\khat =a'$. We define the effect $b$ \textit{conditioned} by the effect $a$
\textit{relative} to $\brac{J,K}$ by
\begin{equation*}
\paren{b\ab{J,K}a}=a\sqbrac{J}b+a'\sqbrac{K}b
\end{equation*}
We interpret $\paren{b\ab{J,K}a}$ as the effect $b$ conditioned on whether $a$ is true or false as measured by the operations, $J,K$ respectively, In terms of probabilities, we have
\begin{align}                
\label{eq13}
\pscript _\rho\paren{b\ab{J,K}a}&=\rmtr\sqbrac{\rho J^*(b)}+\rmtr\sqbrac{\rho K^*(b)}
   =\rmtr\sqbrac{J(\rho )b}+\rmtr\sqbrac{K(\rho )b}\notag\\
   &=\pscript _\rho (a)\pscript _{\widetilde{J(\rho )}}(b)+\pscript _\rho (a')\pscript _{\widetilde{K(\rho )}}(b)
\end{align}
Equation \eqref{eq13} is a type of Bayes' rule where $\pscript _{\widetilde{J(\rho )}}$ is the probability that $b$ is true given that $a$ is true
and $\pscript _{\widetilde{K(\rho )}}(b)$ is the probability that $b$ is true given that $a$ is false. We say that $b$ \textit{is not influenced by} $a$ relative to $\brac{J,K}$ if $b=\paren{b\ab{J,K}a}$.

\section{L\"uders and Holevo Operations}  
The most important example of an operation is the L\"uders operation $L^a$, $a\in\escript (H)$, given by $L^a(A)=a^{1/2}Aa^{1/2}$. Since
\begin{equation*}
\rmtr\sqbrac{L^a(\rho )b}=\rmtr (a^{1/2}\rho a^{1/2}b)=\rmtr (\rho a^{1/2}ba^{1/2})=\rmtr\sqbrac{\rho (L^a)^*(b)}
\end{equation*}
we have that $(L^a)^*(b)=a^{1/2}ba^{1/2}=L^a(b)$ so $L^a$ is \textit{self-adjoint} in the sense that $L^a=(L^a)^*$. Moreover,
$(L^a)^\wedge =(L^a)^*(I)=a$ so $L^a$ measures $a$. In fact, $L^a$ is the unique L\"uders operation that measures $a$. An effect $a$ is
\textit{sharp} if $a$ is a projection. We denote the set of L\"uders operations by $\lscript$.

\begin{thm}    
\label{thm21}
{\rm{(i)}}\enspace $(L^a\circ J)^\wedge =a^{1/2}\jhat a^{1/2}$ for all $J\in\oscript (H)$.
{\rm{(ii)}}\enspace $a\sqbrac{L^a}b=a^{1/2}ba^{1/2}=L^a(b)$.
{\rm{(iii)}}\enspace $(J\circ L^a)^\wedge =J^*(a)$.
{\rm{(iv)}}\enspace $(L^a\circ L^b)^\wedge =a^{1/2}ba^{1/2}$.
{\rm{(v)}}\enspace $a$ commutes with $b$ relative to $\lscript$ if and only if $ab=ba$, that is, $a$ and $b$ commute in the usual operator sense.
{\rm{(vi)}}\enspace If $a$ is sharp, then $b$ is not influenced by $a$ relative to $\brac{L^a,L^{a'}}$ if and only if $ab=ba$.
\end{thm}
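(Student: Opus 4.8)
The plan is to treat (i)--(iv) as quick consequences of Theorem~\ref{thm12}(iv) together with the self-adjointness $(L^a)^*=L^a$ and the fact $(L^a)^\wedge=a$ recorded before the theorem, and to concentrate the real work on (v). For (i) I would apply Theorem~\ref{thm12}(iv) to get $(L^a\circ J)^\wedge=(L^a)^*(\jhat\,)=L^a(\jhat\,)=a^{1/2}\jhat a^{1/2}$. Part (ii) is immediate from the definition of the sequential product and self-adjointness, $a\sqbrac{L^a}b=(L^a)^*(b)=a^{1/2}ba^{1/2}$. Part (iii) is again Theorem~\ref{thm12}(iv), $(J\circ L^a)^\wedge=J^*\paren{(L^a)^\wedge}=J^*(a)$, and (iv) is the special case $J=L^b$ of (i), since $(L^b)^\wedge=b$.

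For (v) I would first note that $L^a$ is the unique L\"uders operation measuring $a$ and $L^b$ the unique one measuring $b$, so that ``$a$ commutes with $b$ relative to $\lscript$'' forces the witnessing operations to be $L^a$ and $L^b$; by (ii) the defining equation $J^*(b)=K^*(a)$ then reads precisely $a^{1/2}ba^{1/2}=b^{1/2}ab^{1/2}$. Thus (v) reduces to the operator identity $a^{1/2}ba^{1/2}=b^{1/2}ab^{1/2}\Longleftrightarrow ab=ba$. The converse is routine: if $ab=ba$ then $a^{1/2}$ commutes with $b^{1/2}$ and both sides equal $ab$. The forward implication is the main obstacle. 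The key step is to set $T=a^{1/2}b^{1/2}$ and observe $T^*T=b^{1/2}ab^{1/2}$ while $TT^*=a^{1/2}ba^{1/2}$, so the hypothesis says exactly that $T$ is normal. Now $T=a^{1/2}b^{1/2}$ and the positive operator $b^{1/4}a^{1/2}b^{1/4}$ have the same nonzero spectrum (the general fact $\sigma(XY)\setminus\brac{0}=\sigma(YX)\setminus\brac{0}$ with $X=a^{1/2}b^{1/4}$, $Y=b^{1/4}$), so $T$ has spectrum contained in $[0,\infty)$; a normal operator with real spectrum is self-adjoint by the spectral theorem, giving $a^{1/2}b^{1/2}=b^{1/2}a^{1/2}$ and hence $ab=ba$. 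I expect this to be the delicate point; it is exactly the Gudder--Nagy characterization of commutativity by the sequential product, which may alternatively be cited.

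Finally, for (vi) I would use sharpness decisively: $a$ a projection gives $a^{1/2}=a$ and $(a')^{1/2}=a'=I-a$, so by the definition of conditioning and (ii), $\paren{b\ab{L^a,L^{a'}}a}=(L^a)^*(b)+(L^{a'})^*(b)=aba+(I-a)b(I-a)$. Expanding $(I-a)b(I-a)=b-ab-ba+aba$ and simplifying, the non-influence condition $b=\paren{b\ab{L^a,L^{a'}}a}$ collapses to $ab+ba=2aba$. If $ab=ba$ then $aba=ab=ba$ and this holds, giving one direction; conversely, multiplying $ab+ba=2aba$ on the left by $a$ and using $a^2=a$ yields $ab=aba$, while multiplying on the right by $a$ yields $ba=aba$, so $ab=ba$. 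The only subtlety here is remembering that sharpness is what collapses $a^{1/2}$ and $(a')^{1/2}$ to the projections themselves; without it the clean algebra above is unavailable.
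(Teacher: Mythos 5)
Your proposal is correct, and for parts (i)--(iv) it follows the paper's proof essentially verbatim: Theorem~\ref{thm12}(iv) plus the self-adjointness $(L^a)^*=L^a$ and $(L^a)^\wedge=a$. The one substantive difference is in (v). The paper makes the same reduction you do (the witnessing L\"uders operations must be $L^a$ and $L^b$, so commutativity relative to $\lscript$ reads $a^{1/2}ba^{1/2}=b^{1/2}ab^{1/2}$), but then simply cites \cite{gn01} for the equivalence of this operator identity with $ab=ba$. You instead prove it: setting $T=a^{1/2}b^{1/2}$, the hypothesis says $TT^*=T^*T$, i.e.\ $T$ is normal; the spectral identity $\sigma (XY)\setminus\brac{0}=\sigma (YX)\setminus\brac{0}$ with $X=a^{1/2}b^{1/4}$, $Y=b^{1/4}$ shows $\sigma (T)$ coincides off zero with the spectrum of the positive operator $b^{1/4}a^{1/2}b^{1/4}=(a^{1/4}b^{1/4})^*(a^{1/4}b^{1/4})$, so $T$ is normal with nonnegative spectrum, hence self-adjoint, giving $a^{1/2}b^{1/2}=b^{1/2}a^{1/2}$ and thus $ab=ba$. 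This argument is sound and is in substance the Gudder--Nagy proof reproduced inline; what it buys is self-containedness, at the cost of invoking spectral theory the paper keeps out of view behind the citation. In (vi) you also deviate slightly: you expand $a'ba'$ to reduce the non-influence condition to $ab+ba=2aba$ and then multiply by $a$ on the left and on the right, whereas the paper multiplies $b=aba+a'ba'$ on the left by $a$ to get $ab=aba$ and concludes $ab=(ab)^*=ba$ from self-adjointness of $aba$; both are correct and equally elementary.
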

\begin{proof}
(i)\enspace By Theorem~\ref{thm12}(iv) we have that
\begin{equation*}
(L^a\circ J)^\wedge =(L^a)^*(\jhat\,)=a^{1/2}\jhat a^{1/2}
\end{equation*}
(ii)\enspace This follows from
\begin{equation*}
a\sqbrac{L^a}b=(L^a)^*(b)=a^{1/2}ba^{1/2}=L^a(b)
\end{equation*}
(iii)\enspace Applying Theorem~\ref{thm12}(iv) we obtain
\begin{equation*}
(J\circ L^a)^\wedge =J^*\sqbrac{(L^a)^\wedge}=J^*(a)
\end{equation*}
(iv) follows from (i).\enspace
(v)\enspace We have that $a$ commutes with $b$ relative to $\lscript$ if and only if
\begin{equation*}
a^{1/2}ba^{1/2} = a\sqbrac{L^a}b=b\sqbrac{L^b}a=b^{1/2}ab^{1/2}
\end{equation*}
which holds if and only if $ab=ba$ \cite{gn01}.
(vi)\enspace If $a$ is sharp then $a^{1/2}=a$ so $b$ is not influenced by $a$ relative to $\brac{L^a,L^{a'}}$ if and only if
\begin{equation*}
b=a\sqbrac{L^a}b+a'\sqbrac{L^{a'}}b=aba+a'ba'
\end{equation*}
Multiplying on left by $a$ gives $ab=aba$. Hence, $ab=(ab)^*=b^*a^*=ba$. Conversely, if $ab=ba$, then
\begin{equation*}
aba+a'ba'=ab+a'b=b\qedhere
\end{equation*}
\end{proof}

Theorem~\ref{thm21}(i) and (iii) show that $L^a\circ J$ measures $a^{1/2}\jhat a^{1/2}$ and $J\circ L^a$ measures $J^*(a)$ for all $J\in\oscript (H)$. We call $a\square b=a\sqbrac{L^a}b=a^{1/2}ba^{1/2}$ the \textit{standard sequential product} of $a$ and $b$ \cite{gg02,gn01,gud320,gud21}. Of course, if $J\ne L^a$ then $a\sqbrac{J}b\ne a^{1/2}ba^{1/2}$, in general. Theorem~\ref{thm14}(vi) shows that in a certain sense, a sequential product preserves convex combinations. This does not imply that when $0\le\lambda\le 1$ we have
\begin{equation*}
\sqbrac{\lambda a+(1-\lambda )b}\square c=\lambda a\square c+(1-\lambda )b\square c
\end{equation*}
which does not hold in general. In fact, we have that
\begin{equation*}
\sqbrac{\lambda a+(1-\lambda )b}{\square} c=\sqbrac{\lambda a+(1-\lambda )b}^{1/2}c\sqbrac{\lambda a+(1-\lambda )b}^{1/2}
\end{equation*}
On the other hand
\begin{equation*}
\lambda a\square c+(1-\lambda )b\square c=\lambda a^{1/2}ca^{1/2}+(1-\lambda )b^{1/2}cb^{1/2}
\end{equation*}

For $\alpha\in\sscript (H)$, $a\in\escript (H)$, we call $H_{(\alpha ,a)}(\rho )=\rmtr (\rho a)\alpha$ the
\textit{Holevo operation with state $\alpha$ and effect} $a$ \cite{hol94}. The next theorem shows that the sequential product of any operation with a Holevo operation is again a Holevo operation. It also shows that $\hhat _{(\alpha ,a)}=a$ for any $\alpha\in\sscript (H)$. This illustrates the fact that an effect can be measured by many operations. We denote the set of Holevo operations by $\hscript$.

\begin{thm}    
\label{thm22}
{\rm{(i)}}\enspace $H^*_{(\alpha ,a)}(b)=\rmtr (\alpha b)a$ and $\hhat _{(\alpha ,a)}=a$ for all $\alpha\in\sscript (H)$, $a,b\in\escript (H)$.
{\rm{(ii)}}\enspace $H_{(\alpha ,a)}\circ J=H_{({\widetilde{J\alpha}},\rmtr (J\alpha )a)}$ and $J\circ H_{(\alpha ,a)}=H_{(\alpha ,J^*(a))}$.
{\rm{(iii)}}\enspace $(H_{(\alpha ,a)}\circ J)^\wedge=\rmtr\sqbrac{J(\alpha )}a$ and $(J\circ H_{(\alpha ,a)})^\wedge =J^*(a)$.
{\rm{(iv)}}\enspace $H_{(\beta ,b)}\circ H_{(\alpha ,a)}=H_{(\alpha ,\rmtr (\beta a)b)}$.
{\rm{(v)}}\enspace $a\sqbrac{H_{(\alpha ,a)}}b=\sqbrac{\rmtr (\alpha b)}a$.
{\rm{(vi)}}\enspace $a$ commutes with $b$ relative to $\hscript$ if and only if there exists $\alpha ,\beta\in\sscript (H)$ such that
$\rmtr (\alpha b)a=\rmtr (\beta a)b$.
{\rm{(vii)}}\enspace $a$ does not influence $b$ relative to $\brac{H_{(\alpha ,a)},H_{(\beta ,a')}}$ if and only if
$b=\rmtr\sqbrac{(\alpha -\beta )b}a+\rmtr (\beta b)I$. In particular, if $\alpha =\beta$ then $b=\rmtr (\alpha b)I$.
{\rm{(viii)}}\enspace $\paren{b\ab{H_{(\alpha ,a)},H_{(\beta ,a')}}a}=\rmtr\sqbrac{(\alpha -\beta )b}a+\rmtr (\beta b)I$.
\end{thm}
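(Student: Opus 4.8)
The whole theorem rests on part~(i), so the plan is to establish it first by direct computation and then harvest the remaining parts from it. For~(i) I would start from the defining duality relation of Theorem~\ref{thm11}, namely $\rmtr\sqbrac{\rho H^*_{(\alpha,a)}(b)}=\rmtr\sqbrac{H_{(\alpha,a)}(\rho)b}$, substitute the Holevo form $H_{(\alpha,a)}(\rho)=\rmtr(\rho a)\alpha$, and pull the scalar $\rmtr(\rho a)$ outside the trace to obtain $\rmtr(\rho a)\rmtr(\alpha b)$. Rewriting this as $\rmtr\sqbrac{\rho\,\rmtr(\alpha b)a}$ and invoking uniqueness of the dual lets me read off $H^*_{(\alpha,a)}(b)=\rmtr(\alpha b)a$. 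The measured effect is then immediate from Theorem~\ref{thm12}(i): $\hhat_{(\alpha,a)}=H^*_{(\alpha,a)}(I)=\rmtr(\alpha)a=a$, using $\rmtr(\alpha)=1$.

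For the composition formulas~(ii) I would evaluate each side on an arbitrary state. Applying $H_{(\alpha,a)}\circ J$ to $\rho$ gives $J\paren{\rmtr(\rho a)\alpha}=\rmtr(\rho a)J(\alpha)$, and when comparing with $H_{(\widetilde{J\alpha},\,\rmtr(J\alpha)a)}$ the point to watch is that the scalar $\rmtr(J\alpha)$ in the subscript effect cancels exactly against the normalization in $\widetilde{J\alpha}=J(\alpha)/\rmtr(J\alpha)$. For the other composition, $J\circ H_{(\alpha,a)}$ applied to $\rho$ is $\rmtr\sqbrac{J(\rho)a}\alpha$, and one converts $\rmtr\sqbrac{J(\rho)a}$ into $\rmtr\sqbrac{\rho J^*(a)}$ by duality to recognize $H_{(\alpha,J^*(a))}$. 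Along the way I would check that each subscript effect is admissible, i.e.\ lies in $\escript(H)$: $\rmtr(J\alpha)a\le a$ since $J$ is trace non-increasing, $J^*(a)\in\escript(H)$ since $J^*\in\oscript^*(H)$, and $\rmtr(\beta a)b\le b$ in~(iv). Parts~(iii), (iv), (v) then fall out: (iii) reads off the measured effects of the two Holevo operations produced in~(ii) via~(i) (equivalently via Theorem~\ref{thm12}(iv)); (iv) follows either by the same one-line computation or from the second formula of~(ii) with $J=H_{(\beta,b)}$ together with~(i); and (v) is simply $a\sqbrac{H_{(\alpha,a)}}b=H^*_{(\alpha,a)}(b)=\rmtr(\alpha b)a$, which is legitimate because $\hhat_{(\alpha,a)}=a$.

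For the commutativity and conditioning statements I would first record the structural observation that, since $\hhat_{(\gamma,c)}=c$, the Holevo operations measuring $a$ are exactly the $H_{(\alpha,a)}$ and those measuring $b$ are exactly the $H_{(\beta,b)}$. Then~(vi) unwinds cleanly: $a$ commutes with $b$ relative to $\hscript$ means $a\sqbrac{H_{(\alpha,a)}}b=b\sqbrac{H_{(\beta,b)}}a$ for some states $\alpha,\beta$, which by~(v) is precisely $\rmtr(\alpha b)a=\rmtr(\beta a)b$, and both directions are then immediate. For~(viii) I would expand the definition of the conditioned effect as $H^*_{(\alpha,a)}(b)+H^*_{(\beta,a')}(b)$, apply~(i) twice to get $\rmtr(\alpha b)a+\rmtr(\beta b)a'$, and substitute $a'=I-a$ to reach $\rmtr\sqbrac{(\alpha-\beta)b}a+\rmtr(\beta b)I$. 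Part~(vii) is then just the equation $b=\paren{b\ab{H_{(\alpha,a)},H_{(\beta,a')}}a}$ using the expression from~(viii), with the $\alpha=\beta$ specialization following because the first term vanishes.

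There is no deep obstacle here; the argument is a sequence of trace manipulations, all downstream of~(i). The one place demanding genuine care is~(ii): I must track the normalization $\widetilde{J\alpha}$ precisely and confirm that the scalar factor in the subscript effect cancels it, and I must verify at each step that the effects appearing as subscripts actually belong to $\escript(H)$ so that the resulting maps are bona fide Holevo operations. The only non-computational ingredient is the structural remark in~(vi) identifying all the Holevo operations that measure a given effect; once that is in place, the stated equivalence is forced.
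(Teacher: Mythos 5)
Your proposal is correct and takes essentially the same route as the paper: (i) by the duality relation and a trace manipulation, (ii) by evaluating both compositions on an arbitrary state with the normalization of $\widetilde{J\alpha}$ cancelling the scalar $\rmtr (J\alpha )$, and (iii)--(viii) harvested from (i), (ii) and (v) exactly as the paper does. The only differences are cosmetic --- your explicit checks that the subscript effects lie in $\escript (H)$, the explicit remark that the Holevo operations measuring $a$ are precisely the $H_{(\alpha ,a)}$, and deriving (viii) before (vii) rather than after.
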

\begin{proof}
(i)\enspace We have that
\begin{align*}
\rmtr\sqbrac{\rho H^*_{(\alpha ,a)}b}&=\rmtr\sqbrac{H_{(\alpha ,a)}(\rho )b}=\rmtr\sqbrac{\rmtr (\rho a)\alpha b}=\rmtr (\rho a)\rmtr (\alpha b)\\
   &=\rmtr\sqbrac{\rho\rmtr (\alpha b)a}
\end{align*}
It follows that $H^*_{(\alpha ,a)}(b)=\rmtr (\alpha b)a$. We conclude that $H_{(\alpha ,a)}$ measures the effect
\begin{equation*}
\hhat _{(\alpha ,a)}=H^*_{(\alpha ,a)}(I)=\rmtr (\alpha I)a=a
\end{equation*}
(ii)\enspace For all $\rho\in\sscript (H)$ obtain
\begin{align*}
(H_{(\alpha ,a)}\circ J)(\rho )&=J\sqbrac{H_{(\alpha ,a)}(\rho )}=J\sqbrac{\rmtr (\rho a)\alpha}=\rmtr (\rho a)J(\alpha )\\
   &=\rmtr\sqbrac{\rho\rmtr (J\alpha )a}{\widetilde{J\alpha}}=H_{({\widetilde{J\alpha}},\rmtr (J\alpha )a)}(\rho )
\end{align*}
and the result follows. Moreover, for all $\rho\in\sscript (H)$ we obtain
\begin{align*}
(J\circ H_{(\alpha ,a)})(\rho )&=H_{(\alpha ,a)}\sqbrac{J(\rho )}=\rmtr\sqbrac{J(\rho )a}\alpha =\rmtr\sqbrac{\rho J^*(a)}\alpha\\
   &=H_{(\alpha ,J^*(a))}(\rho )
\end{align*}
and the result follows.
(iii)\enspace These follow from (i) and (ii).
(iv)\enspace Applying (i) and (ii) gives
\begin{equation*}
H_{(\beta ,b)}\circ H_{(\alpha ,a)}=H_{(\alpha H^*_{(\beta ,b)}(a))}=H_{(\alpha ,\rmtr (\beta a)b)}
\end{equation*}
(v)\enspace Applying (i) gives
\begin{equation*}
a\sqbrac{H_{(\alpha ,a)}}b=H^*_{(\alpha ,a)}(b)=\rmtr (\alpha b)a
\end{equation*}
(vi)\enspace By (v) we have that $a\sqbrac{H_{(\alpha ,a)}}b=\rmtr (\alpha b)a$ and $b\sqbrac{H_{(\beta ,b)}}a=\rmtr (\beta a)b$. Hence,
$a\sqbrac{H_{(\alpha ,a)}}b=b\sqbrac{H_{(\beta ,b)}}a$ if and only if $\rmtr (\alpha b)a=\rmtr (\beta a)b$.
(vii)\enspace For all $\alpha ,\beta\in\sscript (H)$ we have by (v) that
\begin{align*}
a\sqbrac{H_{(\alpha ,a)}}b+&a'\sqbrac{H_{(\beta ,a')}}b=H^*_{(\alpha ,a)}(b)+H^*_{(\beta ,a')}(b)=\rmtr (\alpha b)a+\rmtr (\beta b)a'\\
    &=\rmtr (\alpha b)a+\rmtr (\beta b)I-\rmtr (\beta b)a=\rmtr\sqbrac{(\alpha -\beta )b}a+\rmtr (\beta b)I
\end{align*}
The result follows. (viii)\enspace This follows from (vii).
\end{proof}

Theorem~\ref{thm14}(iv) shows that $a\sqbrac{J}I=a$ for all $a\in\escript (H)$. We can use Holevo operations to show that
$I\sqbrac{J}a\ne a$, in general. Applying Theorem~\ref{thm22}(i) we have that
\begin{equation*}
I\sqbrac{H_{(\alpha ,I)}}a=H^*_{(\alpha ,I)}(a)=\rmtr (\alpha a)I\ne a
\end{equation*}
in general.

\section{Instruments and Observables}  
We now extend our previous work to the theory of instruments and observables. If $(\Omega _\iscript ,\fscript _\iscript )$ is a measurable space, an \textit{instrument} on $H$ with \textit{outcome space} $(\Omega _\iscript ,\fscript _\iscript )$ is an operation-valued measure on
$\fscript _\iscript$. That is, $\Delta\mapsto\iscript (\Delta )\in\oscript (H)$ is countably additive relative to a suitable topology and
$\iscript (\Omega _\iscript )=\iscriptbar$ is a channel \cite{bgl95,dl70,hz12,nc00}. We denote the set of instruments on $H$ by $\inset (H)$.
We interpret an instrument as an apparatus that can be employed to perform measurements. Then $\iscript (\Delta )$ is the operation that results when a measurement of $\iscript$ gives an outcome in $\Delta$. For any $\rho\in\sscript (H)$, we call
$\Phi _\rho ^\iscript (\Delta )=\rmtr\sqbrac{\iscript (\Delta )(\rho )}$ the \textit{distribution} of $\iscript$ in the state $\rho$ and interpret
$\Phi _\rho ^\iscript (\Delta )$ as the probability that a measurement of $\iscript$ results in an outcome in $\Delta$ when the system is in the state $\rho$. Notice that since $\iscriptbar =\iscript (\Omega _\iscript )$ is a channel, we have that
\begin{equation*}
\Phi _\rho ^\iscript (\Omega _\iscript )=\rmtr\sqbrac{\,\iscriptbar (\rho )}=1
\end{equation*}
so $\Phi _\rho ^\iscript$ is a probability measure for every $\rho\in\sscript (H)$. If $\jscript$ is another instrument with outcome space
$(\Omega _\jscript ,\fscript _\jscript )$, their \textit{sequential product} $\jscript\circ\jscript$ of $\iscript$ then $\jscript$ is the instrument with outcome space $(\Omega _\iscript\times\Omega _\jscript ,\fscript _\iscript\times\fscript _\jscript )$ that satisfies
\begin{equation*}
(\iscript\circ\jscript )(\Delta\times\Gamma )(\rho )=\jscript (\Gamma )\paren{\iscript (\Delta )(\rho )}
\end{equation*}
for all $\Delta\in\fscript _\iscript$, $\Gamma\in\fscript _\jscript$, $\rho\in\sscript (H)$. The \textit{joint distribution} satisfies
\begin{equation*}
\Phi _\rho ^{\iscript\circ\jscript}(\Delta\times\Gamma )=\rmtr\sqbrac{(\iscript\circ\jscript )(\Delta\times\Gamma )(\rho )}
  =\rmtr\sqbrac{\jscript (\Gamma )\paren{\iscript (\Delta )}(\rho )}
\end{equation*}
We define $\jscript$ \textit{conditioned by} $\iscript$ to be the instrument $(\jscript\mid\iscript )$ with outcome space
$(\Omega _\jscript ,\fscript _\jscript )$ given by
\begin{equation*}
(\jscript\mid\iscript )(\Gamma )(\rho )=\jscript (\Gamma )\sqbrac{\,\iscriptbar (\rho )}
    =\jscript (\Gamma )\sqbrac{\iscript (\Omega _\iscript )(\rho )}
\end{equation*}

If $\iscript\in\inset (H)$ we have that $\iscript (\Delta )\in\oscript (H)$ and hence $\iscript (\Delta )^*\in\oscript ^*(H)$ for all
$\Delta\in\fscript _\iscript$. We call $\iscript ^*(\Delta )=\iscript (\Delta )^*$ a \textit{dual instrument}. Thus, $\iscript ^*$ is a dual
operation-valued measure on $(\Omega _\iscript ,\fscript _\iscript )$ satisfying $\iscript ^*(\Delta )\colon\escript (H)\to\escript (H)$ for all
$\Delta\in\fscript _\iscript$ and by Theorem~\ref{thm11}
\begin{equation*}
\iscript ^*(\Omega _\iscript )(I)=\iscriptbar (I)=I
\end{equation*}
Moreover, $\iscript ^*$ is the unique dual instrument satisfying
\begin{equation}                
\label{eq31}
\rmtr\sqbrac{\rho\iscript ^*(\Delta )(a)}=\rmtr\sqbrac{\iscript (\Delta )(\rho )a}
\end{equation}
for all $\rho\in\sscript (H)$, $\Delta\in\fscript _\iscript$, $a\in\escript (H)$. We denote the set of dual instruments by $\inset^*(H)$.

If $(\Omega _A,\fscript _A)$ is a measurable space, an \textit{observable} $A$ on $H$ with \textit{outcome space} $(\Omega _A,\fscript _A)$ is an effect-valued measure on $\fscript _A$ satisfying $A(\Omega _A)=I$ \cite{bgl95,dl70,gud120, hz12,nc00}. We denote the set of observables on $H$ by $\obset (H)$. If $A\in\obset (H)$, we interpret $A(\Delta )$ as the effect resulting from $A$ having an outcome in
$\Delta\in\fscript _A$ when $A$ is measured. The probability that $A$ results in an outcome in $\Delta$ when the system is in the state
$\rho\in\sscript (H)$ is given by $\Phi _\rho ^A(\Delta )=\rmtr\sqbrac{\rho A(\Delta )}$ and $\Phi _\rho ^A$ is the \textit{distribution} of $A$ in the state $\rho$. If $\iscript\in\inset (H)$, the unique observable $\iscripthat\in\oscript (H)$ \textit{measured} by $\iscript$ has outcome space
$(\Omega _\iscript ,\fscript _\iscript )$ and satisfies \cite{dl70,hz12,nc00}
\begin{equation}                
\label{eq32}
\Phi _\rho ^{\iscripthat}(\Delta )=\rmtr\sqbrac{\rho\iscripthat (\Delta )}=\rmtr\sqbrac{\iscript (\Delta )(\rho )}=\Phi _\rho ^\iscript (\Delta )
\end{equation}
Applying \eqref{eq31} and \eqref{eq32} we obtain
\begin{equation*}
\rmtr\sqbrac{\rho\iscripthat (\Delta )}=\rmtr\sqbrac{\rho\iscripthat (\Delta )I}=\rmtr\sqbrac{\iscript (\Delta )(\rho )I}
   =\rmtr\sqbrac{\rho\iscript ^*(\Delta )(I)}
\end{equation*}
for all $\rho\in\sscript (H)$. Hence, for all $\Delta\in\fscript _\iscript$ we have
\begin{equation}                
\label{eq33}
\iscripthat (\Delta )=\iscript ^*(\Delta )(I)
\end{equation}
As with operations, although $\iscript\in\inset (H)$ measures an unique observable $\iscripthat$, an observable is measured by many instruments. Moreover, we interpret an instrument $\iscript$ as an apparatus that can be employed to measure the observable $\iscripthat$. The next result follows from Theorem~\ref{thm12}.

\begin{thm}    
\label{thm31}
{\rm{(i)}}\enspace If $0\le\lambda _i\le 1$ with $\sum\lambda _i=1$ and $\iscript _i\in\inset (H)$, then $\sum\lambda _i\iscript _i\in\inset (H)$,
$\paren{\sum\lambda _i\iscript _i}^*=\sum\lambda _i\iscript _i^*$ and
$\paren{\sum\lambda _i\iscript _i}^\wedge =\sum\lambda _i\iscripthat _i$.
{\rm{(ii)}}\enspace If we also have $0\le\mu _j\le 1$ with $\sum\mu _j=1$ and $\jscript _j\in\inset (H)$, then
\begin{equation*}
\paren{\sum\lambda _i\iscript _i}\circ\paren{\sum\mu _j\jscript _j}=\sum _{i,j}\lambda _i\mu _j\iscript _i\circ\jscript _j
\end{equation*}
and a similar result holds where $\iscript _i,\jscript _j\in\inset ^*(H)$.
{\rm{(iii)}}\enspace If $\iscript ,\jscript\in\inset (H)$, then $(\iscript\circ\jscript )^*=\jscript ^*\circ\iscript ^*$ and
$(\iscript\circ\jscript )^\wedge =\iscript ^*(\,\jscripthat )$.
\end{thm}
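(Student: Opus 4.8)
The plan is to reduce every assertion to the corresponding operation-level statement in Theorem~\ref{thm12}, applied separately on each measurable set (for the convex combination) or on each rectangle $\Delta\times\Gamma$ (for the sequential products), and then to dispatch the measure-theoretic bookkeeping needed to know that these pointwise identities assemble into identities of instruments, dual instruments, and observables. The point is that the dual instrument, the measured observable, and the sequential product are all defined pointwise, and that by \eqref{eq33} we have $\iscripthat(\Delta)=\iscript^*(\Delta)(I)$, which is exactly the instrument-valued analogue of $\jhat=J^*(I)$.

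For part (i), I would first check that $\sum\lambda_i\iscript_i$ is an instrument. For each $\Delta\in\fscript_\iscript$ the map $\sum\lambda_i\iscript_i(\Delta)$ is a convex combination of operations, hence an operation (the operations form a convex set, as noted just before Theorem~\ref{thm12}); countable additivity of $\Delta\mapsto\sum\lambda_i\iscript_i(\Delta)$ follows from the countable additivity of each $\iscript_i$ together with the finiteness of the convex sum; and the total $\sum\lambda_i\iscript_i(\Omega_\iscript)=\sum\lambda_i\iscriptbar_i$ is a convex combination of channels, hence a channel by linearity of the trace. The remaining two identities are then immediate: applying Theorem~\ref{thm12}(ii) to the operations $\iscript_i(\Delta)$ gives $\paren{\sum\lambda_i\iscript_i}^*(\Delta)=\sum\lambda_i\iscript_i^*(\Delta)$ for every $\Delta$, and evaluating at $I$ and using \eqref{eq33} gives $\paren{\sum\lambda_i\iscript_i}^\wedge(\Delta)=\sum\lambda_i\iscripthat_i(\Delta)$.

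For parts (ii) and (iii), the key observation is that the sequential product $\iscript\circ\jscript$ is determined by its values on rectangles, where $(\iscript\circ\jscript)(\Delta\times\Gamma)=\iscript(\Delta)\circ\jscript(\Gamma)$ as operations. Hence part (ii) follows by applying Theorem~\ref{thm12}(iii) to the operations $\iscript_i(\Delta)$ and $\jscript_j(\Gamma)$ on each rectangle, and the dual-instrument version follows the same way from the dual-operation case of Theorem~\ref{thm12}(iii). For part (iii), I apply Theorem~\ref{thm12}(iv) on each rectangle: $(\iscript\circ\jscript)^*(\Delta\times\Gamma)=\sqbrac{\iscript(\Delta)\circ\jscript(\Gamma)}^*=\jscript^*(\Gamma)\circ\iscript^*(\Delta)$, which is exactly $\jscript^*\circ\iscript^*$ on the corresponding rectangle. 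Evaluating this dual operation at $I$ and using \eqref{eq33} gives $(\iscript\circ\jscript)^\wedge(\Delta\times\Gamma)=\iscript^*(\Delta)\bigl(\jscript^*(\Gamma)(I)\bigr)=\iscript^*(\Delta)\bigl(\jscripthat(\Gamma)\bigr)$, which is $\iscript^*(\jscripthat)$ on that rectangle, the direct analogue of $(J\circ K)^\wedge=J^*(\khat\,)$.

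The main obstacle is not any single computation but the measure-theoretic packaging: one must argue that agreement on all rectangles $\Delta\times\Gamma$ forces agreement of the two product instruments, dual instruments, and observables on the whole product $\sigma$-algebra $\fscript_\iscript\times\fscript_\jscript$, and that countable additivity together with the channel normalization are preserved under the construction in (i). Both are routine—the product instrument is determined by its values on the generating family of rectangles, and the trace is linear—so I expect these verifications to go through without difficulty once the pointwise identities of Theorem~\ref{thm12} are in hand.
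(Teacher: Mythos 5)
Your proposal is correct and follows exactly the route the paper intends: the paper gives no written proof beyond the remark that Theorem~\ref{thm31} ``follows from Theorem~\ref{thm12},'' and your pointwise reduction---applying Theorem~\ref{thm12}(ii)--(iv) on each $\Delta$ or each rectangle $\Delta\times\Gamma$ and invoking \eqref{eq33}---is precisely the intended argument, with the measure-theoretic bookkeeping made explicit. No gaps; the rectangle identities suffice since the product instruments and observables are defined by their values on rectangles.
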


Let $A,B\in\obset (H)$ and let $\iscript\in\inset (H)$ satisfy $\iscripthat =A$. We define the \textit{sequential product of} $A$
\textit{then} $B$ \textit{relative to} $\iscript$ as the observable with outcome space $(\Omega _A\times\Omega _B$,
$\fscript _A\times\fscript _B)$ given by $A\sqbrac{\iscript}B=\iscript ^*(B)$. This is shorthand notation for
\begin{equation}                
\label{eq34}
A\sqbrac{\iscript}B(\Delta\times\Gamma)=\iscript ^*(B)(\Delta\times\Gamma )=\iscript ^*(\Delta )\paren{B(\Gamma )}
   =\iscript (\Delta )^*\paren{B(\Gamma )}
\end{equation}
Notice that $A\sqbrac{\iscript}B$ depends on the instrument $\iscript$ that measures $A$, but does not depend on the instrument measuring $B$. This is because $B$ is measured second so its measurement cannot influence the $A$ measurement. Applying \eqref{eq31} and \eqref{eq34}, the distribution of $A\sqbrac{\iscript}B$ satisfies
\begin{align}                
\label{eq35}
\Phi _\rho ^{A\sqbrac{\iscript}B}(\Delta\times\Gamma )&=\rmtr\sqbrac{\rho A\sqbrac{\iscript}B(\Delta\times\Gamma )}
  =\rmtr\sqbrac{\rho\iscript (\Delta )^*\paren{B(\Gamma )}}\notag\\
  &=\rmtr\sqbrac{\iscript (\Delta )(\rho )B(\Gamma )}
\end{align}
It follows from \eqref{eq35} that
\begin{equation*}
\Phi _\rho ^{A\sqbrac{\iscript}B}(\Delta\times\Omega _B)=\rmtr\sqbrac{\iscript (\Delta )(\rho )}=\Phi _\rho^A(\Delta )
\end{equation*}
for all $\rho\in\sscript (H)$, $\Delta\in\fscript _A$. We define $B$ \textit{conditioned by} $A$ \textit{relative to} $\iscript$ as the observable with outcomes space $(\Omega _B,\fscript _B)$ given by
\begin{equation*}
\paren{B\mid\iscript\mid A}(\Gamma )=\iscriptbar\,^*\paren{B(\Gamma )}
\end{equation*}
for all $\Gamma\in\fscript _B$. The distribution of $\paren{B\mid\iscript\mid A}$ becomes
\begin{align}                
\label{eq36}
\Phi _\rho ^{\paren{B\mid\iscript\mid A}}(\Gamma )&=\rmtr\sqbrac{\rho\iscriptbar\,^*\paren{B(\Gamma )}}
   =\rmtr\sqbrac{\,\iscriptbar (\rho )B(\Gamma )}=\rmtr\sqbrac{\iscript (\Omega _\iscript )(\rho )B(\Gamma )}\notag\\
   &=\Phi _\rho ^{A\sqbrac{\iscript}B}(\Omega _\iscript\times\Gamma )
\end{align}

Notice that this idea has already been presented in the quantum formalism when we consider the updated state after the measurement of $A$ results in an outcome in $\Delta$. This updated state depends on the instrument $\iscript$ employed to measure $A$ and is given by
\begin{equation*}
\rho\mapsto\iscript (\Delta )\rho/\rmtr\sqbrac{\iscript (\Delta )(\rho )}=\sqbrac{\iscript (\Delta )(\rho )}^\sim
\end{equation*}
Using a different instrument to measure $A$ results in a different updated state in general. Even though $A\sqbrac{\iscript}B$ and
$\paren{B\mid\iscript\mid A}$ do not depend on the instrument $\jscript$ used to measure $B$, the next result gives an expression involving
$\jscript$.

\begin{lem}    
\label{lem32}
Let $\iscript ,\jscript\in\inset (H)$ satisfy $\iscripthat =A$, $\jscripthat =B$.\newline
{\rm{(i)}}\enspace $A\sqbrac{\iscript}B=(\iscript\circ\jscript )^\wedge$.
{\rm{(ii)}}\enspace $\paren{B\mid\iscript\mid A}=\paren{\jscript\mid\iscript}^\wedge$.
\end{lem}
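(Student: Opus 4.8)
The plan is to reduce both identities to the operation-level formula $(J\circ K)^\wedge=J^*(\khat\,)$ from Theorem~\ref{thm12}(iv), evaluated after restricting the instruments to measurable sets. The observation I would use throughout is that, by \eqref{eq33} and Theorem~\ref{thm12}(i), the operation $\iscript(\Delta)$ measures the effect $\iscript(\Delta)^\wedge=\iscript^*(\Delta)(I)=\iscripthat(\Delta)=A(\Delta)$, and likewise $\jscript(\Gamma)^\wedge=\jscripthat(\Gamma)=B(\Gamma)$ since $\jscripthat=B$. Thus each instrument, when evaluated on a set, hands us an operation whose measured effect is already known.

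For (i), note first that $A\sqbrac{\iscript}B$ and $(\iscript\circ\jscript)^\wedge$ are both observables with outcome space $(\Omega_\iscript\times\Omega_\jscript,\fscript_\iscript\times\fscript_\jscript)$, so it suffices to check that they agree on rectangles $\Delta\times\Gamma$; since these generate the product $\sigma$-algebra, agreement there forces the probability measures $\Phi_\rho$ to coincide for every state, hence the observables to coincide. Fixing $\Delta\in\fscript_\iscript$ and $\Gamma\in\fscript_\jscript$, the definition of the sequential product of instruments shows that the operation $(\iscript\circ\jscript)(\Delta\times\Gamma)$ on $\lscript(H)$ is the composite $\iscript(\Delta)\circ\jscript(\Gamma)$. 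By \eqref{eq33} and Theorem~\ref{thm12}(i), the value of the measured observable $(\iscript\circ\jscript)^\wedge$ at $\Delta\times\Gamma$ is exactly the effect measured by this operation, so Theorem~\ref{thm12}(iv) gives
\begin{equation*}
(\iscript\circ\jscript)^\wedge(\Delta\times\Gamma)=\sqbrac{\iscript(\Delta)\circ\jscript(\Gamma)}^\wedge=\iscript(\Delta)^*\paren{\jscript(\Gamma)^\wedge}=\iscript(\Delta)^*\paren{B(\Gamma)}.
\end{equation*}
By the shorthand \eqref{eq34}, the right-hand side is precisely $A\sqbrac{\iscript}B(\Delta\times\Gamma)$, which closes (i). Equivalently, one may invoke Theorem~\ref{thm14}(i) directly with $J=\iscript(\Delta)$ and $K=\jscript(\Gamma)$, since these operations measure $A(\Delta)$ and $B(\Gamma)$ respectively.

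Part (ii) is the same computation with the channel $\iscriptbar=\iscript(\Omega_\iscript)$ in place of $\iscript(\Delta)$. By its definition, the operation $(\jscript\mid\iscript)(\Gamma)$ is the composite $\iscriptbar\circ\jscript(\Gamma)$, so Theorem~\ref{thm12}(iv) yields $(\jscript\mid\iscript)^\wedge(\Gamma)=\sqbrac{\iscriptbar\circ\jscript(\Gamma)}^\wedge=\iscriptbar\,^*\paren{\jscript(\Gamma)^\wedge}=\iscriptbar\,^*\paren{B(\Gamma)}$, which is the definition of $\paren{B\mid\iscript\mid A}(\Gamma)$. The only step demanding care is bookkeeping the order of composition, since passing to the dual reverses it (Theorem~\ref{thm12}(iv)); I expect this to be the main place an error could creep in, together with the single non-algebraic point noted in (i) that agreement on rectangles determines the observable on the full product $\sigma$-algebra.
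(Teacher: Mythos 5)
Your proof is correct, and it organizes the argument differently from the paper. For (i) the paper simply cites the instrument-level identity $(\iscript\circ\jscript )^\wedge =\iscript ^*(\,\jscripthat\,)$ of Theorem~\ref{thm31}(iii) (which is itself stated without proof as a consequence of Theorem~\ref{thm12}), whereas you re-derive exactly that identity on rectangles from the operation-level fact $(J\circ K)^\wedge =J^*(\khat\,)$, adding the correct measure-theoretic remark that agreement on rectangles determines the observable; since \eqref{eq34} defines $A\sqbrac{\iscript}B$ on rectangles in the first place, your computation is essentially the omitted proof of Theorem~\ref{thm31}(iii) made explicit. The more genuine divergence is in (ii): the paper verifies $(\jscript\mid\iscript )^\wedge =\paren{B\mid\iscript\mid A}$ by a direct trace computation against arbitrary states, in effect re-proving the duality relation \eqref{eq31} by hand, whereas you observe that $(\jscript\mid\iscript )(\Gamma )=\iscriptbar\circ\jscript (\Gamma )$ in the paper's composition convention (you handle the order correctly: the dual reverses it) and then quote Theorem~\ref{thm12}(iv) once more. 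Your route is more modular: both parts become the same one-line application of Theorem~\ref{thm12}(iv), and it makes visible that (ii) is just (i) evaluated at $\Delta =\Omega _\iscript$ (compare Lemma~\ref{lem33}); what the paper's hands-on computation buys instead is self-containedness, verifying the defining duality directly without routing through the operation-level machinery of Section~1.
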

\begin{proof}
(i)\enspace By Theorem~\ref{thm31}(iii) we obtain
\begin{equation*}
(\iscript\circ\jscript )^\wedge =\iscript ^*(\,\jscripthat\,)=\iscript ^*(B)=A\sqbrac{\iscript}B
\end{equation*}
(ii)\enspace For all $\rho\in\sscript (H)$, $\Gamma\in\fscript _B$ we have
\begin{align*}
\rmtr\sqbrac{\rho (\jscript\mid\iscript )^\wedge (\Gamma )}&=\rmtr\sqbrac{\rho (\jscript\mid\iscript )^*(\Gamma )(I)}
   =\rmtr\sqbrac{(\jscript\mid\iscript )(\Gamma )(\rho )I}\\
   &=\rmtr\brac{\jscript (\Gamma )\sqbrac{\,\iscriptbar (\rho )}I}=\rmtr\sqbrac{\,\iscriptbar (\rho )\jscript ^*(\Gamma )(I)}\\
   &=\rmtr\sqbrac{\iscriptbar (\rho )\,\jscripthat (\Gamma )}=\rmtr\sqbrac{\,\iscriptbar (\rho )B(\Gamma )}
   =\rmtr\sqbrac{\rho\iscriptbar\,^*\paren{B(\Gamma )}}
\end{align*}
Hence,
\begin{equation*}
(\jscript\mid\iscript )^\wedge (\Gamma )=\iscriptbar\,^*\paren{B(\Gamma )}=\paren{B\mid\iscript\mid A}(\Gamma )
\end{equation*}
for all $\Gamma\in\fscript _B$ so $(\jscript\mid\iscript )^\wedge =\paren{B\mid\iscript\mid A}$.
\end{proof}

If $\mu$ is a probability measure on $(\Omega ,\fscript )$ we call $\iscript _\mu(\Delta )(\rho )=\mu (\Delta )\rho$ for $\Delta\in\fscript$ an
\textit{identity instrument with measure} $\mu$. Similarly, we define the \textit{identity observable with measure} $\mu$ as $A_\mu (\Delta )=\mu (\Delta )I$ for $\Delta\in\fscript$. These are the simplest types of instruments and observables. The next theorem illustrates this theory in terms of these simple types. We first need an elementary lemma.

\begin{lem}    
\label{lem33}
If $A,B\in\obset (H)$, then $A\sqbrac{\iscript}B(\Omega _A\times\Gamma )=\paren{B\mid\iscript\mid A}(\Gamma )$ and
$A\sqbrac{\iscript}B(\Delta\times\Omega _B)=A(\Delta)$.
\end{lem}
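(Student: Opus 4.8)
The plan is to prove both identities by directly unwinding the definitions of the sequential product observable, the conditioned observable, and the relation \eqref{eq33} between $\iscripthat$ and $\iscript ^*$. Throughout, $\iscript\in\inset (H)$ is the instrument with $\iscripthat =A$ used to form $A\sqbrac{\iscript}B$, so its outcome space is $(\Omega _\iscript ,\fscript _\iscript )=(\Omega _A,\fscript _A)$; in particular $\Omega _A=\Omega _\iscript$ and $\iscript (\Omega _A)=\iscriptbar$.

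For the first identity, I would evaluate the defining formula \eqref{eq34} for $A\sqbrac{\iscript}B$ at the rectangle $\Omega _A\times\Gamma$, which gives $A\sqbrac{\iscript}B(\Omega _A\times\Gamma )=\iscript ^*(\Omega _A)\paren{B(\Gamma )}$. Since $\Omega _A=\Omega _\iscript$ and $\iscript ^*(\Omega _\iscript )=\iscript (\Omega _\iscript )^*=\iscriptbar\,^*$, this equals $\iscriptbar\,^*\paren{B(\Gamma )}$, which is exactly the definition of $\paren{B\mid\iscript\mid A}(\Gamma )$.

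For the second identity, I would evaluate \eqref{eq34} at $\Delta\times\Omega _B$ to obtain $A\sqbrac{\iscript}B(\Delta\times\Omega _B)=\iscript ^*(\Delta )\paren{B(\Omega _B)}$. Because $B$ is an observable, $B(\Omega _B)=I$, so this reduces to $\iscript ^*(\Delta )(I)$; by \eqref{eq33} this is $\iscripthat (\Delta )$, and the hypothesis $\iscripthat =A$ yields the claimed value $A(\Delta )$.

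Both parts are pure definition-chasing, so I do not expect a genuine obstacle. The only point needing a moment's care is that passing to a full outcome space collapses the relevant slot: marginalizing over $\Omega _A$ replaces $\iscript ^*(\Delta )$ by $\iscript ^*(\Omega _\iscript )=\iscriptbar\,^*$, while marginalizing over $\Omega _B$ replaces $B(\Gamma )$ by $B(\Omega _B)=I$. Recognizing these two reductions, together with the identification $\Omega _A=\Omega _\iscript$ coming from $\iscripthat =A$, is all that the proof requires.
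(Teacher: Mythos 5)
Your proposal is correct and follows essentially the same route as the paper: both identities are obtained by evaluating the defining formula \eqref{eq34} at the rectangles $\Omega _A\times\Gamma$ and $\Delta\times\Omega _B$, then using $\iscript (\Omega _\iscript )^*=\iscriptbar\,^*$ for the first and $B(\Omega _B)=I$ together with $\iscript ^*(\Delta )(I)=\iscripthat (\Delta )=A(\Delta )$ for the second. The paper's proof is exactly this definition-chasing argument, so there is nothing to add.
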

\begin{proof}
For all $\Gamma\in\fscript _B$ we obtain
\begin{equation*}
A\sqbrac{\iscript}B(\Omega _A\times\Gamma )=\iscript (\Omega _\iscript )^*\paren{B(\Gamma )}=\iscriptbar\,^*\paren{B(\Gamma )}
   =\paren{B\mid\iscript\mid A}(\Gamma )
\end{equation*}
Moreover, for all $\Delta\in\fscript _A$ we obtain
\begin{equation*}
A\sqbrac{\iscript}B(\Delta\times\Omega _B)=\iscript (\Delta )^*\paren{B(\Omega _B)}=\iscript (\Delta )^*I=\iscripthat (\Delta )=A(\Delta )
\qedhere
\end{equation*}
\end{proof}

\begin{thm}    
\label{thm34}
Let $\iscript _\mu$ be the identity instrument with measure $\mu$.
{\rm{(i)}}\enspace $\iscript _\mu ^*(\Delta )(a)=\mu (\Delta )(a)$ for all $\Delta\in\fscript$, $a\in\escript (H)$ and
$\iscripthat _\mu (\Delta )=\mu (\Delta )I$ is the identity observable with measure $\mu$.
{\rm{(ii)}}\enspace If $A=\iscripthat _\mu$ and $B=\jscripthat$, then
\begin{equation*}
A\sqbrac{\iscript _\mu}B(\Delta\times\Gamma )=B\sqbrac{\jscript}A(\Gamma\times\Delta )=\mu (\Delta )B(\Gamma )
\end{equation*}
{\rm{(iii)}}\enspace If $A=\iscripthat _\mu$ and $B=\jscripthat$, then $\paren{B\mid\iscript _\mu\mid A}=B$ and
$\paren{A\mid\jscript\mid B}=A$.
{\rm{(iv)}}\enspace If $A=\iscripthat _\mu$ and $B=\iscripthat _\nu$, then $A\sqbrac{\iscript _\mu}B$ is the identity observable with measure
$\mu\times\nu$.
{\rm{(v)}}\enspace If $\jscript\in\inset (H)$, then $\paren{\jscript\mid\iscript _\mu}=\jscript$,
$\paren{\iscript _\mu\mid\jscript}(\Delta )(\rho )=\mu (\Delta )\sqbrac{\,\jscripthat (\rho )}$, $(\jscript\mid\iscript _\mu )^\wedge =\jscripthat$ and
$(\iscript _\mu\mid\jscript )^\wedge =\iscripthat _\mu$.
\end{thm}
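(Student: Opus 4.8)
The plan is to derive the entire theorem from the defining trace relation \eqref{eq31} for the dual instrument, the normalization $\mu(\Omega)=1$, and Theorem~\ref{thm11} (a channel's dual fixes $I$). I would begin with (i), since it underlies everything else. For $\iscript_\mu(\Delta)(\rho)=\mu(\Delta)\rho$, substituting into \eqref{eq31} gives $\rmtr\sqbrac{\rho\iscript_\mu^*(\Delta)(a)}=\rmtr\sqbrac{\mu(\Delta)\rho a}=\rmtr\sqbrac{\rho\,\mu(\Delta)a}$ for every $\rho\in\sscript(H)$, so $\iscript_\mu^*(\Delta)(a)=\mu(\Delta)a$. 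Taking $a=I$ and using \eqref{eq33} then yields $\iscripthat_\mu(\Delta)=\iscript_\mu^*(\Delta)(I)=\mu(\Delta)I$, the identity observable with measure $\mu$.

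Parts (ii)--(iv) reduce to substitutions into the sequential-product formula \eqref{eq34}, namely $A\sqbrac{\iscript}B(\Delta\times\Gamma)=\iscript(\Delta)^*\paren{B(\Gamma)}$. For (ii), part (i) gives $A\sqbrac{\iscript_\mu}B(\Delta\times\Gamma)=\iscript_\mu(\Delta)^*\paren{B(\Gamma)}=\mu(\Delta)B(\Gamma)$; for the other half, $B\sqbrac{\jscript}A(\Gamma\times\Delta)=\jscript(\Gamma)^*\paren{\mu(\Delta)I}=\mu(\Delta)\jscript(\Gamma)^*(I)=\mu(\Delta)\jscripthat(\Gamma)=\mu(\Delta)B(\Gamma)$ via \eqref{eq33}. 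Part (iv) is then the special case $B=\iscripthat_\nu$, where $B(\Gamma)=\nu(\Gamma)I$, so $A\sqbrac{\iscript_\mu}B(\Delta\times\Gamma)=\mu(\Delta)\nu(\Gamma)I=(\mu\times\nu)(\Delta\times\Gamma)I$, which is exactly the identity observable with measure $\mu\times\nu$.

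For (iii) and the conditioning claims in (v) the recurring tool is that $\iscriptbar=\iscript(\Omega_\iscript)$ is a channel, so by Theorem~\ref{thm11} its dual fixes $I$. Since $\mu(\Omega)=1$, the operation $\iscript_\mu(\Omega)$ is the identity channel, whence $\paren{B\mid\iscript_\mu\mid A}(\Gamma)=\iscript_\mu(\Omega)^*\paren{B(\Gamma)}=B(\Gamma)$, and dually $\paren{A\mid\jscript\mid B}(\Delta)=\jscriptbar^*\paren{\mu(\Delta)I}=\mu(\Delta)\jscriptbar^*(I)=\mu(\Delta)I=A(\Delta)$ because $\jscriptbar^*(I)=I$. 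In (v), $\paren{\jscript\mid\iscript_\mu}(\Gamma)(\rho)=\jscript(\Gamma)\sqbrac{\iscript_\mu(\Omega)(\rho)}=\jscript(\Gamma)(\rho)$, so $\paren{\jscript\mid\iscript_\mu}=\jscript$ and hence $\paren{\jscript\mid\iscript_\mu}^\wedge=\jscripthat$; while $\paren{\iscript_\mu\mid\jscript}(\Delta)(\rho)=\iscript_\mu(\Delta)\sqbrac{\,\jscriptbar(\rho)}=\mu(\Delta)\jscriptbar(\rho)$ directly from the definition, and passing to the dual at $I$ gives $\paren{\iscript_\mu\mid\jscript}^\wedge(\Delta)=\mu(\Delta)\jscriptbar^*(I)=\mu(\Delta)I=\iscripthat_\mu(\Delta)$, again using that $\jscriptbar$ is a channel.

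There is no substantive obstacle here; once (i) is in hand, each remaining item is a one- or two-line computation. The only point requiring care is bookkeeping: one must consistently distinguish $\iscriptbar=\iscript(\Omega_\iscript)$ (the total operation on states, which the normalization $\mu(\Omega)=1$ collapses to the identity channel for $\iscript_\mu$) from the measured observable $\iscripthat$, and remember to evaluate the dual at $I$ through \eqref{eq33} each time a $\wedge$ appears.
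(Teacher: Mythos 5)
Your proof is correct and follows essentially the same route as the paper's: part (i) via the duality trace relation, parts (ii) and (iv) by substitution into the sequential-product formula together with \eqref{eq33}, and parts (iii) and (v) from the normalization $\mu (\Omega )=1$ and the channel property $\jscriptbar\,^*(I)=I$. The only cosmetic differences are that in (iii) you compute directly from the definition of the conditioned observable where the paper routes through Lemma~\ref{lem33} and part (ii), and in (v) you deduce $(\jscript\mid\iscript _\mu )^\wedge =\jscripthat$ from the already-established instrument equality rather than by a separate dual computation.
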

\begin{proof}
(i)\enspace For all $\rho\in\sscript (H)$, $a\in\escript (H)$, $\Delta\in\fscript$ we have
\begin{equation*}
\sqbrac{\rho\iscript _\mu ^*(\Delta )(a)}=\rmtr\sqbrac{\iscript _\mu (\Delta )(\rho )a}=\rmtr\sqbrac{\mu (\Delta )\rho a}
   =\rmtr\sqbrac{\rho\mu (\Delta )a}
\end{equation*}
Hence, $\iscript _\mu ^*(\Delta )(a)=\mu (\Delta )a$. It follows that
\begin{equation*}
\iscripthat _\mu (\Delta )=\iscript _\mu ^*(\Delta )(I)=\mu (\Delta )I
\end{equation*}
(ii)\enspace Since $A\sqbrac{\iscript _\mu}B=\iscript _\mu ^*(B)$, applying (i) we obtain
\begin{equation*}
A\sqbrac{\iscript _\mu}B(\Delta\times\Gamma )=\iscript _\mu ^*(\Delta )\paren{B(\Gamma )}=\mu (\Delta )B(\Gamma )
\end{equation*}
Since $B\sqbrac{\jscript}A=\jscript ^*(A)$ we obtain
\begin{align*}
B\sqbrac{\jscript}A(\Gamma\times\Delta )&=\jscript ^*(\Gamma )\paren{A(\Delta )}=\jscript ^*(\Gamma )\paren{\mu (\Delta )I}\\
   &=\mu (\Delta )\jscript ^*(\Gamma )(I)=\mu (\Delta )\jscripthat (\Gamma )=\mu (\Delta )B(\Gamma )
\end{align*}
(iii)\enspace Applying (ii) and Lemma~\ref{lem33} gives
\begin{equation*}
\paren{B\mid\iscript _\mu\mid A}(\Gamma )=A\sqbrac{\iscript _\mu}B(\Omega _A\times\Gamma )=\mu (\Omega _\iscript )B(\Gamma )
   =B(\Gamma )
\end{equation*}
Hence, $\paren{B\mid\iscript _\mu\mid A}=B$. Since $\jscripthat =B$, applying Lemma~\ref{lem33} gives
\begin{equation*}
\paren{A\mid\jscript\mid B}(\Delta )=B\sqbrac{\jscript }A(\Omega _B\times\Delta )=\mu (\Delta )B(\Omega _B )=\mu (\Delta )I=A(\Delta )
\end{equation*}
Hence, $\paren{A\mid\jscript\mid B}=A$.
(iv)\enspace Applying (ii) gives
\begin{align*}
A\sqbrac{\iscript _\mu}B(\Delta\times\Gamma )&=\mu (\Delta )B(\Gamma )=\mu (\Delta )\iscripthat _\nu (\Gamma )\\
   &=\mu (\Delta )\nu (\Gamma )I=(\mu\times\nu )(\Delta\times\Gamma )I
\end{align*}
and the result follows.
(v)\enspace For all $\Gamma\in\fscript _\jscript$, $\rho\in\sscript (H)$ we obtain
\begin{equation*}
\paren{\jscript\mid\iscript _\mu}(\Gamma )(\rho )=\jscript (\Gamma )\sqbrac{\,\iscriptbar _\mu (\rho )}=\jscript (\Gamma )(\rho )
\end{equation*}
Hence, $\paren{\jscript\mid\iscript _\mu}=\jscript$. For all $\Delta\in\fscript$, $\rho\in\sscript (H)$ we obtain
\begin{equation*}
\paren{\iscript _\mu\mid\jscript}(\Delta )(\rho )=\iscript _\mu (\Delta )\sqbrac{\,\jscriptbar (\rho )}=\mu (\Delta )\sqbrac{\,\jscriptbar (\rho )}
\end{equation*}
Moreover, for all $\Gamma\in\fscript _\jscript$ we have
\begin{equation*}
\paren{\jscript\mid\iscript _\mu}^\wedge (\Gamma )=\iscriptbar _\mu ^*\paren{\,\jscripthat (\Gamma )}=\jscripthat (\Gamma )
\end{equation*}
Hence, $\paren{\jscript\mid\iscript _\mu}^\wedge =\jscripthat$. Finally, we have for all $\Delta\in\fscript$ that
\begin{align*}
\paren{\iscript _\mu\mid\jscript}^\wedge (\Delta )&=\jscriptbar\,^*\paren{\,\iscripthat _\mu (\Delta )}=\jscriptbar\,^*\sqbrac{\mu (\Delta )I}\\
   &=\mu (\Delta )\jscriptbar\,^*(I)=\mu (\Delta )I=\iscripthat _\mu (\Delta )
\end{align*}
so $\paren{\iscript _\mu\mid\jscript}^\wedge =\iscripthat _\mu$.
\end{proof}
We can extend the definition of a Holevo operation to a Holevo instrument as follows. A \textit{Holevo instrument with state} $\alpha$
\textit{and observable} $A$ has the form $\hscript _{(\alpha ,A)}(\Delta )(\rho )=\rmtr\sqbrac{\rho A(\Delta )}\alpha$ for all
$\Delta\in\Omega _A$.

\begin{thm}    
\label{thm35}
Let $\hscript _{(\alpha ,A)}$ be a Holevo instrument.
{\rm{(i)}}\enspace $\hscript _{(\alpha ,A)}^*(\Delta )(a)=\rmtr (\alpha a)A(\Delta )$ for all $\Delta\in\fscript _A$, $a\in\escript (H)$ and
$\hscripthat _{(\alpha ,A)}=A$.
{\rm{(ii)}}\enspace $A\sqbrac{\hscript _{(\alpha ,A)}}B(\Delta\times\Gamma )=\rmtr\sqbrac{\alpha B(\Gamma )}A(\Delta )$.
{\rm{(iii)}}\enspace $\paren{B\mid\hscript _{(\alpha ,A)}\mid A}(\Gamma )=\rmtr\sqbrac{\alpha B(\Gamma )}I$ which is an identity observable.
\end{thm}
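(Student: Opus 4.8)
The plan is to mirror the proof of Theorem~\ref{thm22} verbatim, reinterpreting the scalar effect $a$ as the operator value $B(\Gamma)$ of the second observable and reading the Holevo operation as the value $\hscript _{(\alpha ,A)}(\Delta )$ of a Holevo instrument. For (i), I would start from the defining duality \eqref{eq31} for instruments. For every $\rho\in\sscript (H)$, $a\in\escript (H)$ and $\Delta\in\fscript _A$,
\[
\rmtr\sqbrac{\rho\hscript _{(\alpha ,A)}^*(\Delta )(a)}=\rmtr\sqbrac{\hscript _{(\alpha ,A)}(\Delta )(\rho )a}=\rmtr\sqbrac{\rmtr(\rho A(\Delta ))\alpha\,a}=\rmtr(\rho A(\Delta ))\rmtr(\alpha a).
\]
Pulling the scalar $\rmtr(\alpha a)$ back inside the trace rewrites the right side as $\rmtr\sqbrac{\rho\,\rmtr(\alpha a)A(\Delta )}$, and since $\rho$ is arbitrary this forces $\hscript _{(\alpha ,A)}^*(\Delta )(a)=\rmtr(\alpha a)A(\Delta )$. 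The measured observable then follows from \eqref{eq33}: setting $a=I$ and using $\rmtr(\alpha )=1$ gives $\hscripthat _{(\alpha ,A)}(\Delta )=\rmtr(\alpha I)A(\Delta )=A(\Delta )$, so $\hscripthat _{(\alpha ,A)}=A$.

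The identity $\hscripthat _{(\alpha ,A)}=A$ is exactly the hypothesis needed to legitimately form $A\sqbrac{\hscript _{(\alpha ,A)}}B$. For (ii) I would therefore unwind the definition \eqref{eq34},
\[
A\sqbrac{\hscript _{(\alpha ,A)}}B(\Delta\times\Gamma )=\hscript _{(\alpha ,A)}^*(\Delta )\paren{B(\Gamma )},
\]
and apply part (i) with the effect $a=B(\Gamma )$ to read off $\rmtr\sqbrac{\alpha B(\Gamma )}A(\Delta )$. For (iii) I would invoke Lemma~\ref{lem33}, which gives $\paren{B\mid\iscript\mid A}(\Gamma )=A\sqbrac{\iscript}B(\Omega _A\times\Gamma )$, and feed in part (ii) with $\Delta =\Omega _A$; since $A$ is an observable we have $A(\Omega _A)=I$, so the right side collapses to $\rmtr\sqbrac{\alpha B(\Gamma )}I$.

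There is no genuine difficulty here, as each part is a one-line specialization of an already proved operation-level fact. The only point meriting a remark—and the closest thing to an obstacle—is justifying the concluding phrase that $\paren{B\mid\hscript _{(\alpha ,A)}\mid A}$ is an identity observable. For that I would verify that $\nu (\Gamma ):=\rmtr\sqbrac{\alpha B(\Gamma )}=\Phi _\alpha ^B(\Gamma )$ is a probability measure on $(\Omega _B,\fscript _B)$: it is nonnegative because $\alpha\ge 0$ and $B(\Gamma )\ge 0$, it is countably additive by the countable additivity of $B$ together with the linearity and continuity of the trace, and $\nu (\Omega _B)=\rmtr\sqbrac{\alpha B(\Omega _B)}=\rmtr(\alpha I)=1$. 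Comparing with the definition $A_\mu (\Gamma )=\mu (\Gamma )I$, this identifies $\paren{B\mid\hscript _{(\alpha ,A)}\mid A}$ as precisely the identity observable with measure $\Phi _\alpha ^B$.
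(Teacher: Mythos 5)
Your proposal is correct and matches the paper's own proof essentially step for step: part (i) by the duality computation $\rmtr\sqbrac{\rho\hscript _{(\alpha ,A)}^*(\Delta )(a)}=\rmtr\sqbrac{\hscript _{(\alpha ,A)}(\Delta )(\rho )a}$ followed by evaluation at $a=I$, part (ii) by specializing (i) to $a=B(\Gamma )$ via \eqref{eq34}, and part (iii) by Lemma~\ref{lem33} with $\Delta =\Omega _A$ and $A(\Omega _A)=I$. Your closing verification that $\Gamma\mapsto\rmtr\sqbrac{\alpha B(\Gamma )}$ is a probability measure is a small addition the paper leaves implicit, but it is a justification of the phrase ``identity observable,'' not a different route.
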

\begin{proof}
(i)\enspace For every $\rho\in\sscript (H)$, $\Delta\in\fscript _A$, $a\in\escript (H)$ we obtain
\begin{align*}
\rmtr\sqbrac{\rho\hscript _{(\alpha ,A)}^*(\Delta )(a)}&=\rmtr\sqbrac{\hscript _{(\alpha ,A)}(\Delta )(\rho )a}
   =\rmtr\brac{\rmtr\sqbrac{\rho A(\Delta )}\alpha a}\\
   &=\rmtr\sqbrac{\rho A(\Delta )}\rmtr (\alpha a)=\rmtr\brac{\rho\rmtr (\alpha a)A(\Delta )}
\end{align*}
Hence, $\hscript _{(\alpha ,A)}^*(\Delta )(a)=\rmtr (\alpha a)A(\Delta )$. Moreover,
\begin{equation*}
\hscripthat _{(\alpha ,A)}(\Delta )=\hscript _{(\alpha ,A)}^*(\Delta )I=A(\Delta )
\end{equation*}
for all $A\in\fscript _A$ so $\hscripthat _{(\alpha ,A)}=A$.
(ii)\enspace Applying (i) we have
\begin{align*}
A\sqbrac{\hscript _{(\alpha ,A)}}B(\Delta\times\Gamma )&=\hscript _{(\alpha ,A)}^*(B)(\Delta\times\Gamma )
   =\hscript _{(\alpha ,A)}^*(\Delta )\paren{B(\Gamma )}\\
   &=\rmtr\sqbrac{\alpha B(\Gamma )}A(\Delta )
\end{align*}
(iii)\enspace Applying Lemma~\ref{lem33} and (ii) give
\begin{align*}
\paren{B\mid\hscript _{(\alpha ,A)}\mid A}(\Gamma )&=A\sqbrac{\hscript _{(\alpha ,A)}}B(\Omega _A\times\Gamma )
   =\rmtr\sqbrac{\alpha B(\Gamma )}A(\Omega _A)\\
   &=\rmtr\sqbrac{\alpha B(\Gamma )}I\qedhere
\end{align*}
\end{proof}

An instrument $\iscript$ is \textit{state constant} if $\iscript (\Delta )(\rho _1)=\iscript (\Delta )(\rho _2)$ for all $\rho _1,\rho _2\in\sscript (H)$,
$\Delta\in\fscript _\iscript$. If $\jscript\in\inset (H)$, $\alpha\in\sscript (H)$, we define the $\alpha$-\textit{state constant} instrument
$\jscript _\alpha$ by $\jscript _\alpha (\Delta )(\rho )=\jscript (\Delta )(\alpha )$ for all $\Delta\in\fscript _\jscript$, $\rho\in\sscript (H)$. It follows that $\iscript$ is state constant if and only if $\iscript =\jscript _\alpha$ for some $\jscript\in\inset (H)$, $\alpha\in\sscript (H)$. For example, the
$\alpha$-state constant instrument $\sqbrac{\hscript _{(\beta ,A)}}_\alpha$ is given by
\begin{equation*}
\sqbrac{\hscript _{(\beta ,A)}}_\alpha (\Delta )(\rho )=\hscript _{(\beta ,A)}(\Delta )(\alpha )=\rmtr\sqbrac{\alpha A(\Delta )}\beta
\end{equation*}
for all $\Delta\in\fscript _A$, $\rho\in\sscript (H)$. Notice that $\jscript _\alpha$ can be extended by linearity to all $\tscript (H)$.

\begin{thm}    
\label{thm36}
If $\iscript ,\jscript\in\inset (H)$, $\alpha\in\sscript (H)$, the following statements hold.
{\rm{(i)}}\enspace $\jscript _\alpha ^*(\Delta )(a)=\rmtr\sqbrac{\jscript (\Delta )(\alpha )a}I$ and $\jscripthat _\alpha$ is the identity observable
$\jscripthat _\alpha (\Delta )=\rmtr\sqbrac{\jscript (\Delta )(\alpha )}I$.
{\rm{(ii)}}\enspace $(\iscript\mid\jscript _\alpha )=\iscript _{\jscriptbar (\alpha )}$, $(\jscript _\alpha\mid\iscript )=\jscript _\alpha$.
{\rm{(iii)}}\enspace $(\iscript\mid\jscript _\alpha )^\wedge$ is the identity observable,
\begin{equation*}
\paren{\iscript\mid\jscript _\alpha}^\wedge (\Delta )=\rmtr\sqbrac{\,\jscriptbar (\alpha )\iscripthat (\Delta )}I
\end{equation*}
and $(\jscript _\alpha\mid\iscript )^\wedge =\jscripthat _\alpha$.
{\rm{(iv)}}\enspace If $A=\jscripthat _\alpha$, then $A\sqbrac{\jscript _\alpha}B$ is the identity observable with measure
$\mu (\Delta\times\Gamma )=\rmtr\sqbrac{\jscript (\Delta )(\alpha )B(\Gamma )}$ and $(B\mid\jscript _\alpha\mid A)$ is the identity observable with measure $\rmtr\sqbrac{\,\jscriptbar (\alpha )B(\Gamma )}$.
{\rm{(v)}}\enspace $\paren{\jscript\mid\hscript _{(\alpha ,A)}}=\jscript _\alpha$.
\end{thm}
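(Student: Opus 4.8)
The plan is to make part (i) carry essentially all the weight; parts (ii)--(v) then follow by unwinding the definitions of the conditioned instrument and the sequential product and substituting (i). For (i), I would compute the dual $\jscript_\alpha^*(\Delta)$ directly from its defining relation \eqref{eq31}. Since $\jscript_\alpha(\Delta)(\rho)=\jscript(\Delta)(\alpha)$ does not depend on $\rho$, the quantity $\rmtr\sqbrac{\jscript_\alpha(\Delta)(\rho)a}=\rmtr\sqbrac{\jscript(\Delta)(\alpha)a}$ is a \emph{scalar} $c$ independent of $\rho$. The key observation is that $\rmtr\sqbrac{\rho\,cI}=c\,\rmtr(\rho)=c$ for every $\rho\in\sscript(H)$, so the operator $cI=\rmtr\sqbrac{\jscript(\Delta)(\alpha)a}I$ satisfies \eqref{eq31}; by the uniqueness of the dual (Theorem~\ref{thm11}) it must equal $\jscript_\alpha^*(\Delta)(a)$. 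Setting $a=I$ and using \eqref{eq33} gives $\jscripthat_\alpha(\Delta)=\jscript_\alpha^*(\Delta)(I)=\rmtr\sqbrac{\jscript(\Delta)(\alpha)}I$, an identity observable.

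For (ii) and (v) I would unwind the conditioning definition $(\jscript\mid\iscript)(\Gamma)(\rho)=\jscript(\Gamma)\sqbrac{\iscriptbar(\rho)}$, the only real work being to evaluate the total operation $\iscriptbar=\iscript(\Omega_\iscript)$ in each case. For $\jscript_\alpha$ the total operation collapses to a fixed operator, $\overline{\jscript_\alpha}(\rho)=\jscript_\alpha(\Omega_\jscript)(\rho)=\jscript(\Omega_\jscript)(\alpha)=\jscriptbar(\alpha)$; hence $(\iscript\mid\jscript_\alpha)(\Delta)(\rho)=\iscript(\Delta)\sqbrac{\jscriptbar(\alpha)}$, which is exactly the state-constant instrument $\iscript_{\jscriptbar(\alpha)}$. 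The identity $(\jscript_\alpha\mid\iscript)=\jscript_\alpha$ holds because $\jscript_\alpha$ ignores its input: $(\jscript_\alpha\mid\iscript)(\Delta)(\rho)=\jscript_\alpha(\Delta)\sqbrac{\iscriptbar(\rho)}=\jscript(\Delta)(\alpha)=\jscript_\alpha(\Delta)(\rho)$. Part (v) is the same computation for the Holevo instrument: since $\overline{\hscript_{(\alpha,A)}}(\rho)=\rmtr\sqbrac{\rho A(\Omega_A)}\alpha=\rmtr(\rho)\alpha=\alpha$ using $A(\Omega_A)=I$, we get $(\jscript\mid\hscript_{(\alpha,A)})(\Gamma)(\rho)=\jscript(\Gamma)(\alpha)=\jscript_\alpha(\Gamma)(\rho)$.

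Parts (iii) and (iv) then reduce to (i) and (ii). For (iii) I substitute (ii) and reapply (i): $(\iscript\mid\jscript_\alpha)^\wedge=(\iscript_{\jscriptbar(\alpha)})^\wedge$, and part (i) with $(\iscript,\jscriptbar(\alpha))$ in place of $(\jscript,\alpha)$ gives $\rmtr\sqbrac{\iscript(\Delta)(\jscriptbar(\alpha))}I$; rewriting the scalar via \eqref{eq31} and \eqref{eq33} turns it into $\rmtr\sqbrac{\jscriptbar(\alpha)\iscripthat(\Delta)}I$, while $(\jscript_\alpha\mid\iscript)^\wedge=\jscripthat_\alpha$ is immediate from $(\jscript_\alpha\mid\iscript)=\jscript_\alpha$. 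For (iv) I use $A\sqbrac{\iscript}B(\Delta\times\Gamma)=\iscript^*(\Delta)\paren{B(\Gamma)}$ from \eqref{eq34} and substitute (i) with $a=B(\Gamma)$, giving the identity observable with measure $\rmtr\sqbrac{\jscript(\Delta)(\alpha)B(\Gamma)}$; the conditioned-observable claim follows by evaluating at $\Delta=\Omega_\jscript$ through Lemma~\ref{lem33}. The only point needing care---and the nearest thing to an obstacle---is that reapplying (i) inside (iii) and (iv) requires $\jscriptbar(\alpha)$ to be a genuine state; this is guaranteed because $\jscriptbar=\jscript(\Omega_\jscript)$ is a channel by the definition of an instrument, hence positive and trace-preserving, so $\jscriptbar(\alpha)\in\sscript(H)$.
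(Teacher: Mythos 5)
Your proposal is correct and takes essentially the same route as the paper: part (i) is proved by the identical trace computation (your explicit appeal to uniqueness from Theorem~\ref{thm11} is just the paper's implicit ``$\rmtr (\rho X)=\rmtr (\rho Y)$ for all $\rho$ implies $X=Y$'' step), and (ii)--(v) unwind the definitions exactly as the paper does. The only cosmetic divergence is in (iii), where the paper applies $\jscriptbar _\alpha ^{\,*}$ to $\iscripthat (\Delta )$ directly via Lemma~\ref{lem32}(ii) rather than routing through (ii) and the state-constant instrument $\iscript _{\jscriptbar (\alpha )}$ as you do; your added check that $\jscriptbar (\alpha )\in\sscript (H)$ because $\jscriptbar$ is a channel is a correct point the paper leaves implicit.
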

\begin{proof}
(i)\enspace For all $\rho\in\sscript (H)$, $\Delta\in\fscript _\jscript$, $a\in\escript (H)$ we have that
\begin{equation*}
\rmtr\sqbrac{\rho\jscript _\alpha ^*(\Delta )(a)}=\rmtr\sqbrac{\jscript _\alpha (\Delta )(\rho )(a)}=\rmtr\sqbrac{\jscript (\Delta )(\alpha )a}
  =\rmtr\brac{\rho\rmtr\sqbrac{\jscript (\Delta )(\alpha )a}I}
\end{equation*}
Hence, $\jscript _\alpha ^*(\Delta )=\rmtr\sqbrac{\jscript (\Delta )(\alpha )a}I$. Moreover,
\begin{equation*}
\jscripthat _\alpha (\Delta )=\jscript _\alpha ^*(\Delta )(I)=\rmtr\sqbrac{\jscript (\Delta )(\alpha )}I(
\end{equation*}
(ii)\enspace For all $\Delta\in\fscript _\iscript$, $\rho\in\sscript (H)$ we have
\begin{equation*}
(\iscript\mid\jscript _\alpha )(\Delta )(\rho )=\iscript (\Delta )\sqbrac{\,\jscriptbar _\alpha (\rho )}=\iscript (\Delta )\sqbrac{\,\jscriptbar (\alpha )}
   =\iscript _{\jscriptbar (\alpha )}(\Delta )(\rho )
\end{equation*}
Hence, $(\iscript\mid\jscript _\alpha )=\iscript _{\jscriptbar (\alpha )}$. Moreover, for all $\Delta\in\fscript _\jscript$, $\rho\in\sscript (H)$ we obtain
\begin{equation*}
(\jscript _\alpha\mid\iscript )(\Delta )(\rho )=\jscript _\alpha (\Delta )\sqbrac{\,\iscriptbar (\rho )}=\jscript (\Delta )(\alpha )
   =\jscript _\alpha (\Delta )(\rho )
\end{equation*}
Thus, $(\jscript _\alpha\mid\iscript )=\jscript _\alpha$.
(iii)\enspace For all $\Delta\in\fscript _\iscript$ we obtain
\begin{equation*}
(\iscript\mid\jscript _\alpha )^\wedge (\Delta )=\jscriptbar _\alpha ^*\sqbrac{\,\iscripthat (\Delta )}
   =\rmtr\sqbrac{\,\jscriptbar (\alpha )\iscripthat (\Delta )}I
\end{equation*}
Applying (ii) gives $(\jscript _\alpha\mid\iscript )^\wedge =\jscripthat _\alpha$.
(iv)\enspace For $\Delta\in\fscript _A$, $\Gamma\in\fscript _B$, applying (i) we obtain
\begin{equation*}
A\sqbrac{\jscript _\alpha}B(\Delta\times\Gamma )=\jscript _\alpha ^*(\Delta )\paren{B(\Gamma )}
   =\rmtr\sqbrac{\jscript (\Delta )(\alpha )B(\Gamma )}I
\end{equation*}
Moreover, for all $\Gamma\in\fscript _B$ we obtain by Lemma~\ref{lem33} that
\begin{equation*}
\paren{B\mid\jscript _\alpha\mid A}(\Gamma )=A\sqbrac{\jscript _\alpha}B(\Omega _A\times\Gamma )
   =\rmtr\sqbrac{\,\jscriptbar (\alpha )B(\Gamma )}I
\end{equation*}
(v)\enspace For all $\Delta\in\fscript _\jscript$, $\rho\in\sscript (H)$ we have
\begin{equation*}
\paren{\jscript\mid\hscript _{(\alpha ,A)}}(\Delta )(\rho )=\jscript (\Delta )\sqbrac{\,\hscriptbar _{(\alpha ,A)}(\rho )}=\jscript (\Delta )(\alpha )
   =\jscript _\alpha (\Delta )(\rho )
\end{equation*}
and the result follows.
\end{proof}

An instrument $\iscript$ is \textit{repeatable} if $\rmtr\sqbrac{\iscript (\Delta )\paren{\iscript (\Delta )\rho}}=\rmtr\sqbrac{\iscript (\Delta )(\rho )}$ for all $\Delta\in\fscript _\iscript$, $\rho\in\sscript (H)$ \cite{hz12}.

\begin{thm}    
\label{thm37}
The following statements are equivalent.
{\rm{(i)}}\enspace $\iscript$ is repeatable.
{\rm{(ii)}}\enspace $\iscripthat (\Delta )=(\iscript\circ\iscript )^\wedge (\Delta\times\Delta )$ for all $\Delta\in\fscript _\iscript$.
{\rm{(iii)}}\enspace $(\iscript\circ\iscript )^\wedge (\Delta _1\times\Delta _2)=0$ whenever $\Delta _1\cap\Delta _2=\emptyset$.
{\rm{(iv)}}\enspace $\iscript\circ\iscript (\Delta _1\times\Delta _2)=0$ whenever $\Delta _1\cap\Delta _2=\emptyset$.
{\rm{(v)}}\enspace $(\iscript\circ\iscript )^\wedge (\Delta _1\times\Delta _2)=\iscripthat (\Delta _1\cap\Delta _2)$ for all
$\Delta _1,\Delta _2\in\fscript _\iscript$.
{\rm{(vi)}}\enspace $\iscripthat\sqbrac{\iscript}\iscripthat (\Delta _1\times\Delta _2)=\iscripthat (\Delta _1\cap\Delta _2)$ for all
$\Delta _1,\Delta _2\in\fscript$.
\end{thm}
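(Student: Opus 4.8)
The plan is to prove the six-way equivalence by pairing off the three ``cosmetic'' equivalences (i)$\Leftrightarrow$(ii), (iii)$\Leftrightarrow$(iv) and (v)$\Leftrightarrow$(vi), each a direct translation, and then closing the cycle (ii)$\Rightarrow$(iii)$\Rightarrow$(v)$\Rightarrow$(ii) among the observable-level statements. Write $M=(\iscript\circ\iscript)^\wedge$. By Lemma~\ref{lem32}(i) (with $\jscript=\iscript$ and $A=B=\iscripthat$) together with \eqref{eq34}, $M$ is the observable on $\Omega_\iscript\times\Omega_\iscript$ given by $M(\Delta_1\times\Delta_2)=\iscript^*(\Delta_1)\paren{\iscripthat(\Delta_2)}$, and it is additive in each argument separately because $\iscript^*$ is additive and $\iscripthat$ is an observable. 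Its first marginal is $M(\Delta\times\Omega_\iscript)=\iscript^*(\Delta)(I)=\iscripthat(\Delta)$ by \eqref{eq33}.

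First I would record the translations. For (i)$\Leftrightarrow$(ii), rewrite repeatability as $\rmtr\sqbrac{(\iscript\circ\iscript)(\Delta\times\Delta)(\rho)}=\rmtr\sqbrac{\iscript(\Delta)(\rho)}$ and use the defining relation of the measured observable together with \eqref{eq32} to turn each side into $\rmtr\sqbrac{\rho M(\Delta\times\Delta)}=\rmtr\sqbrac{\rho\,\iscripthat(\Delta)}$; since states separate operators, this holds for all $\rho$ iff $M(\Delta\times\Delta)=\iscripthat(\Delta)$. For (iii)$\Leftrightarrow$(iv), observe that an operation $J$ with Kraus decomposition $J(A)=\sum C_iAC_i^*$ vanishes iff $\jhat=J^*(I)=\sum C_i^*C_i=0$, since a sum of positive operators is zero iff each summand, hence each $C_i$, is zero; applying this to $J=(\iscript\circ\iscript)(\Delta_1\times\Delta_2)$, whose measured effect is $M(\Delta_1\times\Delta_2)$, gives ``$J=0$'' iff ``$M(\Delta_1\times\Delta_2)=0$'' for each fixed pair. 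For (v)$\Leftrightarrow$(vi), Lemma~\ref{lem32}(i) already identifies $\iscripthat\sqbrac{\iscript}\iscripthat$ with $M$, so the two equations coincide.

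Next come the easy arcs of the cycle. For (v)$\Rightarrow$(ii) set $\Delta_1=\Delta_2=\Delta$ and use $\iscripthat(\Delta\cap\Delta)=\iscripthat(\Delta)$; for (v)$\Rightarrow$(iii) take $\Delta_1\cap\Delta_2=\emptyset$ and use $\iscripthat(\emptyset)=0$. For (iii)$\Rightarrow$(ii) split the marginal as $\iscripthat(\Delta)=M(\Delta\times\Omega_\iscript)=M(\Delta\times\Delta)+M(\Delta\times\Delta^c)$, where $\Delta^c=\Omega_\iscript\setminus\Delta$, and kill the last term by (iii). Finally (iii)$\Rightarrow$(v): writing $E=\Delta_1\cap\Delta_2$, $F=\Delta_1\setminus\Delta_2$, $G=\Delta_2\setminus\Delta_1$, expand $M(\Delta_1\times\Delta_2)$ by bi-additivity into $M(E\times E)+M(E\times G)+M(F\times E)+M(F\times G)$; the three off-diagonal terms vanish by (iii) (each factor pair is disjoint) and the surviving $M(E\times E)$ equals $\iscripthat(E)$ by (ii), which was just derived.

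The crux is (ii)$\Rightarrow$(iii), the only step with real content. Given disjoint $\Delta_1,\Delta_2$, set $\Delta=\Delta_1\cup\Delta_2$ and expand $\iscripthat(\Delta)=M(\Delta\times\Delta)$ by bi-additivity into $M(\Delta_1\times\Delta_1)+M(\Delta_1\times\Delta_2)+M(\Delta_2\times\Delta_1)+M(\Delta_2\times\Delta_2)$; applying (ii) to the two diagonal terms and using $\iscripthat(\Delta)=\iscripthat(\Delta_1)+\iscripthat(\Delta_2)$ forces $M(\Delta_1\times\Delta_2)+M(\Delta_2\times\Delta_1)=0$. Both summands are effects, hence positive operators, so a vanishing sum forces each to vanish, and in particular $M(\Delta_1\times\Delta_2)=0$. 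This positivity step is what makes ``diagonal repeatability'' propagate to ``off-diagonal annihilation,'' and it is the substantive heart of the theorem; everything else is bookkeeping with additivity and the duality \eqref{eq31}.
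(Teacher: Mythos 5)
Your proof is correct and follows essentially the same route as the paper's: the equivalence (i)$\Leftrightarrow$(ii) via duality, the bi-additive expansion of $(\Delta_1\cup\Delta_2)\times(\Delta_1\cup\Delta_2)$ plus positivity of effects for (ii)$\Rightarrow$(iii), the complement decomposition to recover (ii) from (iii), the intersection decomposition for (iii)$\Rightarrow$(v), and Lemma~\ref{lem32} for (v)$\Leftrightarrow$(vi) all mirror the paper's argument. The only cosmetic differences are that you argue (iii)$\Leftrightarrow$(iv) through Kraus operators where the paper uses positivity of $\jscript(\Gamma)(\rho)$ and its trace, you work at the operator level where the paper sometimes works at the trace level, and you spell out the positivity step that the paper leaves implicit in its ``It follows that'' after the four-term expansion.
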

\begin{proof}
(i)$\Leftrightarrow$(ii) If $\iscript$ is repeatable, then for all $\Delta\in\fscript _\iscript$, $\rho\in\sscript (H)$ we obtain by
Lemma~\ref{lem32} that
\begin{align*}
\rmtr\sqbrac{\rho\iscripthat (\Delta )}&=\rmtr\sqbrac{\iscript (\Delta )(\rho )}=\rmtr\sqbrac{\iscript (\Delta )\paren{\iscript (\Delta )\rho )}}
   =\rmtr\sqbrac{\iscript (\Delta )(\rho )\iscripthat (\Delta )}\\
   &=\rmtr\sqbrac{\rho\iscript ^*(\Delta )\paren{\,\iscripthat (\Delta )}}=\rmtr\sqbrac{\rho (\iscript\circ\iscript )^\wedge (\Delta\times\Delta )}
\end{align*}
Hence, $\iscripthat (\Delta )=(\iscript\circ\iscript )^\wedge (\Delta\times\Delta )$ for all $\Delta\in\fscript _\iscript$.
This also implies the converse.\newline
(iii)$\Leftrightarrow$(ii) Suppose (ii) holds and $\Delta _1\cap\Delta _2=\emptyset$. Then
\begin{align*}
\iscripthat (\Delta _1)+\iscripthat (\Delta _2)&=\iscripthat (\Delta _1\cup\Delta _2)
   =(\iscript\circ\iscript )^\wedge (\Delta _1\cup\Delta _2\times\Delta _1\cup\Delta _2)\\
   &=(\iscript\circ\iscript )^\wedge
   (\Delta _1\times\Delta _1\cup\Delta _2\times\Delta _2\cup\Delta _1\times\Delta _2\cup\Delta _2\times\Delta _1)\\
   &=(\iscript\circ\iscript )^\wedge (\Delta _1\times\Delta _1)+(\iscript\circ\iscript )^\wedge (\Delta _2\times\Delta _2)\\
   &\quad +(\iscript\circ\iscript )^\wedge (\Delta _1\times\Delta _2)+(\iscript\circ\iscript )^\wedge (\Delta _2\times\Delta _1)\\
   &=\iscripthat (\Delta _1)+\iscripthat (\Delta _2)+(\iscript\circ\iscript )^\wedge (\Delta _1\times\Delta _2)
   +(\iscript\circ\iscript )^\wedge (\Delta _2\times\Delta _1)
\end{align*}
It follows that $(\iscript\circ\iscript )^\wedge (\Delta _1\times\Delta _2)=0$. To show the converse, suppose (iii) holds. Denoting the complement of $\Delta$ by $\Delta '$, we obtain
\begin{align*}
\rmtr\sqbrac{\iscript (\Delta )(\rho )}&=\rmtr\sqbrac{\iscript (\Delta\cup\Delta ')(\iscript (\Delta )\rho)}
   =\rmtr\sqbrac{\paren{\iscript (\Delta )+\iscript (\Delta ')}\paren{\iscript (\Delta )\rho}}\\
   &=\rmtr\sqbrac{\iscript (\Delta )\paren{\iscript (\Delta )\rho}}+\rmtr\sqbrac{\iscript (\Delta ')\paren{\iscript (\Delta )\rho}}
\end{align*}
Applying Lemma~\ref{lem32} gives
\begin{align*}
\rmtr\sqbrac{\iscript (\Delta ')\paren{\iscript (\Delta )\rho}}&=\rmtr\sqbrac{\iscript (\Delta )(\rho )\iscripthat (\Delta ')}
   =\rmtr\sqbrac{\rho\iscript ^*(\Delta )\paren{\,\iscripthat (\Delta ')}}\\
   &=\rmtr\sqbrac{\rho(\iscript\circ\iscript )^\wedge (\Delta\times\Delta ')}=0
\end{align*}
Hence, $\rmtr\sqbrac{\iscript (\Delta )(\rho )}=\rmtr\sqbrac{\iscript (\Delta )\paren{\iscript (\Delta )\rho}}$ so (i) and (ii) hold.\newline
(iii)$\Leftrightarrow$(iv) If $\jscript (\Gamma )=0$ then
\begin{equation*}
\rmtr\sqbrac{\rho\jscripthat (\Gamma )}=\rmtr\sqbrac{\jscript (\Gamma )(\rho )}=0
\end{equation*}
so $\jscripthat (\Gamma )=0$. Conversely, if $\jscripthat (\Gamma )=0$, then $\rmtr\sqbrac{\jscript (\Gamma )(\rho )}=0$ for all
$\rho\in\sscript (H)$. Since $\jscript (\Gamma )(\rho )$ is positive, it follows that $\jscript (\Gamma )(\rho )=0$ for all $\rho$ so
$\jscript (\Gamma )=0$. Replacing $\jscript$ with $\iscript\circ\iscript$ gives the result.\newline
(iii)$\Leftrightarrow$(v) Suppose (iii) holds. Since
\begin{align*}
\Delta _1\times\Delta _2&=\Delta _1\times\sqbrac{(\Delta _2\cap\Delta _1)\cup (\Delta _2\cap\Delta '_1)}\\
   &=\sqbrac{\Delta _1\times (\Delta _2\cap\Delta _1)}\cup\sqbrac{\Delta _1\times (\Delta _2\cap\Delta '_1)}\\
\intertext{and}
\sqbrac{\Delta _1\times (\Delta _2\cap\Delta _1)}&\cup\sqbrac{\Delta _1\times (\Delta _2\cap\Delta '_1)}
   =\Delta _1\cap (\Delta _2\cap\Delta '_1)=\emptyset
\end{align*}
we have by (iii) that
\begin{equation*}
(\iscript\circ\iscript )^\wedge\sqbrac{\Delta _1\times (\Delta _2\cap\Delta '_1)}=0
\end{equation*}
Since (iii)$\Rightarrow$(ii) we obtain
\begin{align*}
(\iscript\circ\iscript )^\wedge (\Delta _1\times\Delta _2)&=(\iscript\circ\iscript )^\wedge\sqbrac{\Delta _1\times (\Delta _2\cap\Delta _1)}\\
   &=(\iscript\circ\iscript )^\wedge\sqbrac{(\Delta _1\cap\Delta _2)\cup (\Delta _1\cap\Delta '_2)\times (\Delta _2\cap\Delta _1)}\\
   &=(\iscript\circ\iscript )^\wedge
   \sqbrac{(\Delta _1\cap\Delta _2)\times (\Delta _1\cap\Delta _2)\cup (\Delta _1\cap\Delta '_2)\times (\Delta _1\cap\Delta _2)}\\
   &=(\iscript\circ\iscript )^\wedge\sqbrac{(\Delta _1\cap\Delta _2)\times (\Delta _1\cap\Delta _2)}=\iscripthat (\Delta _1\cap\Delta _2)
\end{align*}
Clearly, (v) implies (iii). (v)$\Leftrightarrow$(vi) This follows because by Lemma~\ref{lem32} we have that
$\iscripthat\sqbrac{\iscript}\iscripthat =(\iscript\circ\iscript )^\wedge$. Reversing the implication shows that (vi) implies (i). Alternatively, since
$\iscripthat\sqbrac{\iscript}\iscripthat =(\iscript\circ\iscript )^\wedge$, letting $\Delta _1=\Delta _2=\Delta$ we obtain from (v) that
\begin{equation*}
\iscripthat (\Delta )=\iscripthat\sqbrac{\iscript}\iscripthat (\Delta\times\Delta )=(\iscript\circ\iscript )^\wedge (\Delta\times\Delta )\qedhere
\end{equation*}
\end{proof}

\begin{cor}    
\label{cor38}
The following statements are equivalent.
{\rm{(i)}}\enspace $\iscript$ is repeatable.
{\rm{(ii)}}\enspace $\iscript ^*(\Delta )I=\iscript ^*(\Delta )\sqbrac{\iscript ^*(\Delta )I}$ for al $\Delta\in\fscript _\iscript$.
{\rm{(iii)}}\enspace $\iscript ^*(\Delta _1)\sqbrac{\iscript ^*(\Delta _2)I}=0$ whenever $\Delta _1\cap\Delta _2=\emptyset$.
{\rm{(iv)}}\enspace $\iscript ^*(\Delta _1)\sqbrac{\iscript ^*(\Delta _2)I}=\iscript ^*(\Delta _1\cap\Delta _2)I$ for all
$\Delta _1,\Delta _2\in\fscript _\iscript$.
\end{cor}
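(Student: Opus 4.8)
The plan is to observe that Corollary~\ref{cor38} is nothing more than Theorem~\ref{thm37} rewritten in terms of the dual instrument $\iscript ^*$, so that no genuinely new argument is required beyond two bookkeeping identities. First I would record the two translation formulas that convert the ``hat'' and ``$\circ$'' notation of Theorem~\ref{thm37} into the $\iscript ^*$ notation of the corollary. From \eqref{eq33} we have $\iscripthat (\Delta )=\iscript ^*(\Delta )(I)$, and from Theorem~\ref{thm31}(iii) together with \eqref{eq34} (equivalently, Lemma~\ref{lem32}(i) specialized to $\iscript =\jscript$ and $A=B=\iscripthat$) the product observable evaluated on a rectangle satisfies $(\iscript\circ\iscript )^\wedge (\Delta _1\times\Delta _2)=\iscript ^*(\Delta _1)\paren{\iscripthat (\Delta _2)}=\iscript ^*(\Delta _1)\sqbrac{\iscript ^*(\Delta _2)I}$. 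These are the only facts I will use.

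Next I would substitute these two identities into each clause of Theorem~\ref{thm37}. Condition (ii) of the corollary, $\iscript ^*(\Delta )I=\iscript ^*(\Delta )\sqbrac{\iscript ^*(\Delta )I}$, has left side $\iscripthat (\Delta )$ and right side $(\iscript\circ\iscript )^\wedge (\Delta\times\Delta )$, so it is verbatim Theorem~\ref{thm37}(ii). Condition (iii), $\iscript ^*(\Delta _1)\sqbrac{\iscript ^*(\Delta _2)I}=0$ for disjoint $\Delta _1,\Delta _2$, becomes $(\iscript\circ\iscript )^\wedge (\Delta _1\times\Delta _2)=0$, which is Theorem~\ref{thm37}(iii). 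Condition (iv), $\iscript ^*(\Delta _1)\sqbrac{\iscript ^*(\Delta _2)I}=\iscript ^*(\Delta _1\cap\Delta _2)I$, has left side $(\iscript\circ\iscript )^\wedge (\Delta _1\times\Delta _2)$ and right side $\iscripthat (\Delta _1\cap\Delta _2)$, hence is Theorem~\ref{thm37}(v). Condition (i) is identical in both statements.

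Finally, since Theorem~\ref{thm37} already established that its clauses (i), (ii), (iii), (v) are mutually equivalent, the four clauses of the corollary---being word-for-word restatements of these---are mutually equivalent as well, which completes the argument. I do not expect a real obstacle here; the only point requiring care is the rectangle computation $(\iscript\circ\iscript )^\wedge (\Delta _1\times\Delta _2)=\iscript ^*(\Delta _1)\sqbrac{\iscript ^*(\Delta _2)I}$, and in particular keeping the indices in the correct order, since the first factor $\Delta _1$ is the one acted on by the outer dual operation and a silent transposition there would swap the roles of the conditions. Once that identity is pinned down via \eqref{eq34} and \eqref{eq33}, the remainder is pure substitution into Theorem~\ref{thm37}.
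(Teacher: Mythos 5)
Your proposal is correct and follows essentially the same route as the paper: the paper's own proof likewise translates conditions (ii), (iii), and (v) of Theorem~\ref{thm37} into dual-instrument notation using $\iscripthat (\Delta )=\iscript ^*(\Delta )I$ and $(\iscript\circ\iscript )^\wedge (\Delta _1\times\Delta _2)=\iscript ^*(\Delta _1)\sqbrac{\iscript ^*(\Delta _2)I}$, with the indices ordered exactly as you have them. Nothing further is needed.
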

\begin{proof}
By Theorem~\ref{thm37}(ii), $\iscript$ is repeatable if and only if for all $\Delta\in\fscript _\iscript$ we have
\begin{equation*}
\iscript ^*(\Delta )I=\iscripthat (\Delta )=(\iscript\circ\iscript )^*(\Delta\times\Delta )=\iscript ^*(\Delta )\sqbrac{\,\iscripthat (\Delta )}
   =\iscript ^*\sqbrac{\iscript ^*(\Delta )I}
\end{equation*}
By Theorem~\ref{thm37}(iii), $\iscript$ is repeatable if and only if whenever $\Delta _1\cap\Delta _2=\emptyset$ we have
\begin{equation*}
\iscript ^*(\Delta _1)\sqbrac{\iscript ^*(\Delta _2)I}=(\iscript\circ\iscript )^\wedge (\Delta _1\times\Delta _2)=0
\end{equation*}
By Theorem~\ref{thm37}(v), $\iscript$ is repeatable if and only if for all $\Delta _1,\Delta _2\in\fscript _\iscript$ we have
\begin{align*}
\iscript ^*(\Delta _1)\sqbrac{\iscript ^*(\Delta _2)I}&=(\iscript\circ\iscript )^\wedge (\Delta _1\times\Delta _2)
   =\iscripthat (\Delta _1\cap\Delta _2)\\
   &=\iscript ^*(\Delta _1\cap\Delta _2)I\qedhere
\end{align*}
\end{proof}

\section{Finite Instruments and Observables}  
We now consider finite instruments and observables. One of the main advantages of the finite case is that we can introduce L\"uders instruments \cite{lud51} which do not seem to exist in the infinite case. Although finiteness is a strong assumption, it is general enough to include quantum computation and information theory \cite{hz12,kw17,nc00}. For a finite set $\Omega =\brac{x_1,x_2,\ldots ,x_n}$ we assume that the corresponding $\sigma$-algebra is $2^\Omega$ so the outcome space is specified by $\Omega$ and we need not mention the
$\sigma$-algebra. A \textit{finite instrument} with outcome space $\Omega$ corresponds to a set
\begin{equation*}
\iscript =\brac{\iscript _{x_1},\iscript _{x_2},\ldots ,\iscript _{x_n}}\subseteq\oscript (H)
\end{equation*}
for which $\iscriptbar =\sum\limits _{i=1}^n\iscript _{x_i}$ is a channel. We then define
$\iscript (\Delta )=\sum\limits _{x_i\in\Delta}\iscript _{x_i}$ for all $\Delta\subseteq\Omega$ so $\Delta\mapsto\iscript (\Delta )$ becomes an instrument \cite{gud120,gud220,hz12,nc00}. Similarly, a \textit{finite observable} with outcome space $\Omega$ corresponds to a set
$A=\brac{A_{x_1},A_{x_2},\ldots A_{x_n}}\subseteq\escript (H)$ for which $\sum\limits _{i=1}^nA_{x_i}=I$. We again define
$A(\Delta )=\sum\limits _{x_i\in\Delta}A_{x_i}$ and $\Delta\mapsto A(\Delta )$ becomes an observable. As before, an instrument $\iscript$ measures a unique observable $\iscripthat$ that satisfies $\rmtr (\rho\iscripthat _{x_i})=\rmtr\sqbrac{\iscript _{x_i}(\rho )}$, $i=1,2,\ldots ,n$,
$\rho\in\sscript (H)$. Of course, this is equivalent to
\begin{equation*}
\rmtr\sqbrac{\rho\iscripthat (\Delta )}=\rmtr\sqbrac{\iscript (\Delta )(\rho )}
\end{equation*}
for all $\Delta\subseteq\Omega$, $\rho\in\sscript (H)$. For conciseness, we use the notion
\begin{equation*}
\iscript (x)=\iscript\paren{\brac{x}}=\iscript _x
\end{equation*}

\begin{thm}    
\label{thm41}
A finite instrument $\iscript$ is repeatable if and only if
\begin{equation*}
\rmtr\sqbrac{\iscript _x(\rho )}=\rmtr\sqbrac{\iscript _x\paren{\iscript _x(\rho )}}
\end{equation*}
for all $x\in\Omega _\iscript$, $\rho\in\sscript (H)$.
\end{thm}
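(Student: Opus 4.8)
The plan is to reduce the whole statement to the singleton level and then bridge from singletons to arbitrary subsets, using the characterizations already established in Theorem~\ref{thm37}. First I would dispose of the forward implication, which is immediate: if $\iscript$ is repeatable, then $\rmtr\sqbrac{\iscript(\Delta)(\iscript(\Delta)\rho)}=\rmtr\sqbrac{\iscript(\Delta)(\rho)}$ holds for every $\Delta$, so in particular for each singleton $\Delta=\brac{x}$, which is exactly the stated condition. All the substance lies in the converse.

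For the converse I would first rewrite the hypothesis as an operator identity. Setting $\sigma=\iscript_x(\rho)$ and using \eqref{eq31} with $a=I$ together with \eqref{eq33} gives $\rmtr\sqbrac{\iscript_x(\iscript_x(\rho))}=\rmtr\sqbrac{\iscript_x(\rho)\iscripthat_x}=\rmtr\sqbrac{\rho\,\iscript_x^*(\iscripthat_x)}$, while $\rmtr\sqbrac{\iscript_x(\rho)}=\rmtr\sqbrac{\rho\iscripthat_x}$. Since these hold for all $\rho\in\sscript(H)$, the hypothesis is equivalent to
\[
\iscript_x^*(\iscripthat_x)=\iscripthat_x\quad\text{for all } x\in\Omega_\iscript.
\]
Recognizing $\iscript_x^*(\iscripthat_x)=(\iscript\circ\iscript)^\wedge(\brac{x}\times\brac{x})$ (Lemma~\ref{lem32}), this is precisely Theorem~\ref{thm37}(ii) restricted to singletons, and the task becomes promoting this diagonal information to the full statement.

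The heart of the argument is to show that the off-diagonal terms $\iscript_x^*(\iscripthat_y)$ with $x\ne y$ all vanish. Here I would invoke the normalization $\sum_y\iscripthat_y=\iscriptbar^*(I)=I$ together with $\iscript_x^*(I)=\iscripthat_x$ from \eqref{eq33}. Applying $\iscript_x^*$ and splitting off the diagonal yields $\iscripthat_x=\iscript_x^*(I)=\sum_y\iscript_x^*(\iscripthat_y)=\iscripthat_x+\sum_{y\ne x}\iscript_x^*(\iscripthat_y)$, whence $\sum_{y\ne x}\iscript_x^*(\iscripthat_y)=0$. Now each $\iscript_x^*\in\oscript^*(H)$ is a dual operation, so it carries the effect $\iscripthat_y$ to the effect $\iscript_x^*(\iscripthat_y)\in\escript(H)$, a positive operator; a finite sum of positive operators that vanishes forces every summand to vanish, so $\iscript_x^*(\iscripthat_y)=0$ for all $y\ne x$.

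Finally I would reassemble. Since the dual instrument is additive, $\iscript^*(\Delta_1)=\sum_{x\in\Delta_1}\iscript_x^*$ and $\iscripthat(\Delta_2)=\sum_{y\in\Delta_2}\iscripthat_y$, so $(\iscript\circ\iscript)^\wedge(\Delta_1\times\Delta_2)=\iscript^*(\Delta_1)\paren{\iscripthat(\Delta_2)}=\sum_{x\in\Delta_1,\,y\in\Delta_2}\iscript_x^*(\iscripthat_y)$. When $\Delta_1\cap\Delta_2=\emptyset$ every pair has $x\ne y$, so every term is zero and $(\iscript\circ\iscript)^\wedge(\Delta_1\times\Delta_2)=0$; this is Theorem~\ref{thm37}(iii), which is equivalent to repeatability. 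I expect the passage from the diagonal identities to the off-diagonal vanishing to be the step that needs the most care, since it is exactly there that positivity of the dual operations, rather than mere algebraic bookkeeping, is doing the work; once those terms are killed, the bilinear expansion over $\Delta_1\times\Delta_2$ is routine.
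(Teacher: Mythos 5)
Your proof is correct, but it runs in the dual (Heisenberg) picture, whereas the paper's own proof stays entirely at the level of states and traces and never invokes $\iscript^*$ or Theorem~\ref{thm37}. For the converse the paper fixes $x$, uses the channel property of $\iscriptbar$ to get $\sum_y\rmtr\sqbrac{\iscript _y\paren{\iscript _x(\rho )}}=\rmtr\sqbrac{\iscript _x(\rho )}$, combines this with the hypothesis to conclude $\sum_{y\ne x}\rmtr\sqbrac{\iscript _y\paren{\iscript _x(\rho )}}=0$, kills each term because it is a nonnegative number, and then expands $\rmtr\sqbrac{\iscript (\Delta )\paren{\iscript (\Delta )(\rho )}}$ bilinearly over $x,y\in\Delta$ to verify the definition of repeatability directly. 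Your argument is the precise transcription of this under the duality $\rmtr\sqbrac{\iscript _x(\rho )a}=\rmtr\sqbrac{\rho\,\iscript _x^*(a)}$: the paper's use of trace preservation by $\iscriptbar$ becomes your normalization $\sum_y\iscripthat _y=I$, nonnegativity of traces becomes positivity of the operators $\iscript _x^*(\,\iscripthat _y)$, and instead of checking the definition you close the loop through the equivalence (iii)$\Leftrightarrow$(i) of Theorem~\ref{thm37}. What your route buys is the formally sharper operator identities $\iscript _x^*(\,\iscripthat _y)=0$ for $x\ne y$ (equivalent to, but stronger-looking than, the paper's trace statements), and it makes Corollary~\ref{cor42} and Corollary~\ref{cor38} transparent; what it costs is self-containedness, since you lean on Theorem~\ref{thm37} and on extending the duality relation \eqref{eq31} from states to subnormalized positive operators such as $\iscript _x(\rho )$ (harmless by scaling, and the paper does the same silently inside the proof of Theorem~\ref{thm37}). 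Both proofs hinge on the identical combinatorial step: split the normalization sum at the diagonal and let positivity annihilate the off-diagonal part.
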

\begin{proof}
If $\iscript$ is repeatable, then
\begin{equation*}
\rmtr\sqbrac{\iscript _x(\rho )}=\rmtr\sqbrac{\iscript (x)(\rho )}=\rmtr\sqbrac{\iscript (x)\paren{\iscript (x)(\rho )}}
   =\rmtr\sqbrac{\iscript _x\paren{\iscript _x(\rho )}}
\end{equation*}
for all $x\in\Omega _\iscript$, $\rho\in\sscript (H)$. Conversely, suppose
$\rmtr\sqbrac{\iscript _x(\rho )}=\rmtr\sqbrac{\iscript _x\paren{\iscript _x(\rho )}}$ holds. Since
\begin{equation*}
\sum _y\rmtr\sqbrac{\iscript _y\paren{\iscript _x(\rho )}}=\rmtr\sqbrac{\,\iscriptbar\paren{\iscript _x(\rho )}}
   =\rmtr\sqbrac{\iscript _x(\rho )}
\end{equation*}
we conclude that $\sum\limits _{y\ne x}\rmtr\sqbrac{\iscript _y\paren{\iscript _x(\rho )}}=0$ so
$\rmtr\sqbrac{\iscript _y\paren{\iscript _x(\rho )}}=0$ for all $\rho\in\sscript (H)$ and $y\ne x$. We conclude that
\begin{align*}
\rmtr\sqbrac{\iscript (\Delta )\paren{\iscript (\Delta )(\rho )}}
  &=\rmtr\sqbrac{\paren{\sum _{y\in\Delta}\iscript _y}\paren{\sum _{x\in\Delta}\iscript _x(\rho )}}
  =\sum _{x,y\in\Delta}\rmtr\sqbrac{\iscript _y\paren{\iscript _x(\rho )}}\\
  &=\sum _{x\in\Delta}\rmtr\sqbrac{\iscript _x\paren{\iscript _x(\rho )}}=\sum _{x\in\Delta}\rmtr\sqbrac{\iscript _x(\rho )}
  =\rmtr\sqbrac{\iscript (\Delta )(\rho )}
\end{align*}
for all $\Delta\subseteq\Omega _\iscript$, $\rho\in\sscript (H)$. Hence, $\iscript$ is repeatable.
\end{proof}

The instrument $\iscript\circ\jscript$ and observables $A\sqbrac{\iscript}B$ are determined by their outcomes
$(\iscript\circ\jscript )_{(x,y)}=\iscript _x\circ\jscript _y$ and $\paren{A\sqbrac{\iscript}B}_{(x,y)}=\iscript _x^*(B_y)$. The next result follows from Theorem~\ref{thm37}

\begin{cor}    
\label{cor42}
The following statements for a finite instrument $\iscript$ are equivalent.
{\rm{(i)}}\enspace $\iscript$ is repeatable.
{\rm{(ii)}}\enspace $\iscripthat _x=(\iscript\circ\iscript )_{(x,x)}^\wedge$ for all $x\in\Omega _\iscript$.
{\rm{(iii)}}\enspace $(\iscript\circ\iscript )_{(x,y)}^\wedge =0$ if $x\ne y$.
{\rm{(iv)}}\enspace $(\iscript\circ\iscript )_{(x,y)}^\wedge =\iscripthat\paren{\brac{x}\cap\brac{y}}$ for all $x,y\in\Omega _\iscript$.
{\rm{(v)}}\enspace For all $x,y\in\Omega _\iscript$ we have
\begin{equation*}
\paren{\,\iscripthat\sqbrac{\iscript}\,\iscripthat\,}_{(x,y)}=\iscripthat\paren{\brac{x}\cap\brac{y}})
\end{equation*}
\end{cor}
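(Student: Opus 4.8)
The plan is to recognize each of (ii)--(v) as the restriction to singletons of a condition already shown equivalent to repeatability in Theorem~\ref{thm37}, and then to supply the passage back from singletons to arbitrary subsets of the finite outcome space $\Omega _\iscript$. First I would record the dictionary that makes this precise. With the finite notation $\iscript (x)=\iscript _x$ one has $\iscripthat _x=\iscripthat\paren{\brac{x}}$ and $(\iscript\circ\iscript )_{(x,y)}^\wedge =(\iscript\circ\iscript )^\wedge\paren{\brac{x}\times\brac{y}}$ directly from the definitions, while Lemma~\ref{lem32} gives $\iscripthat\sqbrac{\iscript}\iscripthat =(\iscript\circ\iscript )^\wedge$ and hence $\paren{\iscripthat\sqbrac{\iscript}\iscripthat}_{(x,y)}=(\iscript\circ\iscript )^\wedge\paren{\brac{x}\times\brac{y}}$. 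Under this dictionary (ii), (iii), (iv) and (v) are exactly conditions (ii), (iii), (v) and (vi) of Theorem~\ref{thm37} evaluated on the singletons $\brac{x}$ and $\brac{y}$. The forward implications are then immediate: if $\iscript$ is repeatable, each condition of Theorem~\ref{thm37} holds for all subsets of $\Omega _\iscript$, so specializing to singletons yields (ii)--(v) at once.

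For the converse I would treat (iii) first, since it is the one place where additivity does the real work. Given (iii) and disjoint $\Delta _1,\Delta _2\subseteq\Omega _\iscript$, I would expand
\begin{equation*}
(\iscript\circ\iscript )^\wedge (\Delta _1\times\Delta _2)=\sum _{x\in\Delta _1,\,y\in\Delta _2}(\iscript\circ\iscript )^\wedge\paren{\brac{x}\times\brac{y}}
\end{equation*}
using additivity of the observable $(\iscript\circ\iscript )^\wedge$. Disjointness forces $x\ne y$ in every term, so each summand vanishes by (iii) and the left side is $0$. This is precisely Theorem~\ref{thm37}(iii), whence (i); thus (iii)$\Leftrightarrow$(i).

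Condition (ii) cannot be recovered from its singleton form by additivity alone, because the off-diagonal terms of $(\iscript\circ\iscript )^\wedge (\Delta\times\Delta )$ are not controlled by the diagonal identity, so here I would instead invoke Theorem~\ref{thm41}. Unwinding the two traces through $\rmtr\sqbrac{\iscript _x(\rho )}=\rmtr\sqbrac{\rho\iscripthat _x}$ and $\rmtr\sqbrac{\iscript _x\paren{\iscript _x(\rho )}}=\rmtr\sqbrac{(\iscript _x\circ\iscript _x)(\rho )}=\rmtr\sqbrac{\rho (\iscript\circ\iscript )_{(x,x)}^\wedge}$ shows that the repeatability criterion of Theorem~\ref{thm41} reads $\rmtr\sqbrac{\rho\iscripthat _x}=\rmtr\sqbrac{\rho (\iscript\circ\iscript )_{(x,x)}^\wedge}$ for all $\rho\in\sscript (H)$, which, letting $\rho$ vary, is exactly (ii). Hence (ii)$\Leftrightarrow$(i).

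Finally (iv) and (v) are bookkeeping. Since $\brac{x}\cap\brac{y}$ equals $\brac{x}$ when $x=y$ and is empty otherwise, condition (iv) splits into the diagonal identity $(\iscript\circ\iscript )_{(x,x)}^\wedge =\iscripthat _x$, which is (ii), together with the off-diagonal identity $(\iscript\circ\iscript )_{(x,y)}^\wedge =0$ for $x\ne y$, which is (iii); so (iv) is the conjunction of (ii) and (iii) and is therefore equivalent to (i). For (v), the dictionary entry $\paren{\iscripthat\sqbrac{\iscript}\iscripthat}_{(x,y)}=(\iscript\circ\iscript )_{(x,y)}^\wedge$ coming from Lemma~\ref{lem32} makes (v) identical to (iv), so (v)$\Leftrightarrow$(iv)$\Leftrightarrow$(i). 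The one genuinely analytic step is the equivalence (ii)$\Leftrightarrow$(i) (equivalently (iii)$\Leftrightarrow$(i)), where I must pass from a singleton identity to full repeatability using either the finite channel relation $\sum _x\iscript _x=\iscriptbar$ through Theorem~\ref{thm41} or additivity over the product outcome space; the remaining equivalences are formal consequences of Lemma~\ref{lem32} and the structure of singleton intersections.
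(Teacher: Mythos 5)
Your proposal is correct. Note that the paper offers no written proof here at all: it simply remarks that the instruments and observables involved are determined by their singleton outcomes and that the corollary ``follows from Theorem~\ref{thm37}''. What you have done is supply exactly the singleton-to-arbitrary-sets passage that the paper leaves implicit, so your route is essentially the intended one, fleshed out. One remark deserves correction: your claim that condition (ii) ``cannot be recovered from its singleton form by additivity alone'' is overstated. Since $(\iscript\circ\iscript )^\wedge$ is an observable on the finite product space, summing the diagonal identity over all $x$ gives
\begin{equation*}
\sum _{x\ne y}(\iscript\circ\iscript )_{(x,y)}^\wedge =(\iscript\circ\iscript )^\wedge (\Omega _\iscript\times\Omega _\iscript )-\sum _x\iscripthat _x=I-I=0
\end{equation*}
and positivity of the summands then forces each off-diagonal term to vanish, so (ii) implies (iii) directly; this is the same positivity trick used inside the paper's proof of Theorem~\ref{thm41}. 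Your detour through Theorem~\ref{thm41} is a perfectly valid substitute (that theorem is proved independently, so there is no circularity), and it buys you a shorter argument at the cost of invoking an extra result; the additivity-plus-positivity argument keeps the corollary self-contained relative to Theorem~\ref{thm37}, which is closer to what the paper's one-line citation suggests.
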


We now consider a generalization of a Holevo instrument for the finite case. If $A=\brac{A_x\colon x\in\Omega}$ is a finite observable and
$\alpha _x\in\sscript (H)$, $x\in\Omega$, then the instrument
\begin{equation*}
\sqbrac{\hscript _{(\alpha ,A)}}_x(\rho )=\rmtr (\rho A_x)\alpha _x
\end{equation*}
is called a (\textit{finite}) \textit{Holevo instrument with states} $\alpha _x$ \textit{and observable} $A$. The instrument $\hscript _{(\alpha ,A)}$ is also called a \textit{conditional state preparator} \cite{hz12}.

\begin{lem}    
\label{lem43}
A Holevo instrument $\hscript _{(\alpha ,A)}$ is repeatable if and only if\ $\rmtr (\alpha _xA_x)=1$ for all $x$ with $A_x\ne 0$.
\end{lem}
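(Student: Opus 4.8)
The plan is to apply the finite repeatability criterion of Theorem~\ref{thm41} directly to the operations $\sqbrac{\hscript _{(\alpha ,A)}}_x(\rho )=\rmtr (\rho A_x)\alpha _x$ and to evaluate the two traces it compares. First I would use that each $\alpha _x\in\sscript (H)$ has $\rmtr (\alpha _x)=1$ to obtain the left-hand trace
\begin{equation*}
\rmtr\sqbrac{\sqbrac{\hscript _{(\alpha ,A)}}_x(\rho )}=\rmtr\sqbrac{\rmtr (\rho A_x)\alpha _x}=\rmtr (\rho A_x)\rmtr (\alpha _x)=\rmtr (\rho A_x)
\end{equation*}

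For the right-hand trace I would compose the operation with itself, exploiting its linearity, to get
\begin{equation*}
\sqbrac{\hscript _{(\alpha ,A)}}_x\paren{\sqbrac{\hscript _{(\alpha ,A)}}_x(\rho )}=\rmtr\sqbrac{\rmtr (\rho A_x)\alpha _xA_x}\alpha _x=\rmtr (\rho A_x)\rmtr (\alpha _xA_x)\alpha _x
\end{equation*}
and then take its trace, again using $\rmtr (\alpha _x)=1$, to reach $\rmtr (\rho A_x)\rmtr (\alpha _xA_x)$. By Theorem~\ref{thm41}, repeatability is therefore equivalent to
\begin{equation*}
\rmtr (\rho A_x)\sqbrac{1-\rmtr (\alpha _xA_x)}=0
\end{equation*}
holding for all $x\in\Omega$ and all $\rho\in\sscript (H)$.

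From this reformulation both implications should fall out. For the outcomes with $A_x=0$ the factor $\rmtr (\rho A_x)$ vanishes identically, so these impose no condition. For an outcome with $A_x\ne 0$, the single point needing justification is that a nonzero positive operator is detected by some state: since $A_x\ne 0$ there is a unit vector $\phi$ with $\elbows{\phi ,A_x\phi}>0$, and putting $\rho =\ket{\phi}\bra{\phi}$ gives $\rmtr (\rho A_x)>0$, which forces $\rmtr (\alpha _xA_x)=1$. Conversely, assuming $\rmtr (\alpha _xA_x)=1$ for every $x$ with $A_x\ne 0$, the bracketed factor vanishes on exactly those outcomes while $\rmtr (\rho A_x)\equiv 0$ on the remaining ones, so the displayed identity holds for all $x,\rho$ and repeatability follows. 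I do not anticipate a genuine obstacle: the argument is a two-line computation feeding into Theorem~\ref{thm41}, and the only step beyond algebra is the standard observation that $A_x\ne 0$ guarantees $\elbows{\phi ,A_x\phi}>0$ for some $\phi$.
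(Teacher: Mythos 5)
Your proposal is correct and follows essentially the same route as the paper: both apply Theorem~\ref{thm41}, compute $\rmtr\sqbrac{\iscript _x\paren{\iscript _x(\rho )}}=\rmtr (\rho A_x)\rmtr (\alpha _xA_x)$, and reduce repeatability to $\rmtr (\rho A_x)\sqbrac{1-\rmtr (\alpha _xA_x)}=0$ for all $\rho ,x$. If anything, yours is slightly more complete, since you explicitly justify the existence of a state $\rho$ with $\rmtr (\rho A_x)>0$ when $A_x\ne 0$ and spell out the converse direction, both of which the paper leaves implicit.
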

\begin{proof}
For all $\rho\in\sscript (H)$, $x\in\Omega$, writing $\iscript =\hscript _{(\alpha ,A)}$ we obtain
\begin{equation*}
\rmtr\sqbrac{\iscript _x\paren{\iscript _x(\rho )}}=\rmtr\sqbrac{\iscript _x\paren{\rmtr (\rho A_x)\alpha _x}}
   =\rmtr (\rho A_x)\rmtr\sqbrac{\iscript _x(\alpha _x)}=\rmtr (\rho A_x)\rmtr (\alpha _xA_x)
\end{equation*}
Hence, $\iscript$ is repeatable if and only if
\begin{equation*}
\rmtr\sqbrac{\iscript _x\paren{\iscript _x(\rho )}}=\rmtr\sqbrac{\iscript _x(\rho )}=\rmtr (\rho A_x)
\end{equation*}
which is equivalent to $\rmtr (\rho A_x)\rmtr (\alpha _xA_x)=\rmtr (\rho A_x)$ for all $\rho\in\sscript (H)$, $x\in\Omega$. Choosing $\rho$ such that $\rmtr (\rho A_x)\ne 0$ we conclude that $\rmtr (\alpha _xA_x)=1$ for all $x$ satisfying $A_x\ne 0$.
\end{proof}

In Lemma~\ref{lem43} we can choose $\alpha _x=\ket{\psi _x}\bra{\psi _x}$ where $\ket{\psi _x}$ is a unit eigenvector for $A_x$. We now generalize Theorem~\ref{thm35} for finite Holevo instruments.

\begin{thm}    
\label{thm44}
{\rm{(i)}}\enspace $(\hscript _{(\alpha ,A)}^*)_x(a)=\rmtr (\alpha _xa)A_x$ and $\hscripthat _{(\alpha ,A)}=A$ so $\hscript _{(\alpha ,A)}$ measures $A$.
{\rm{(ii)}}\enspace If $\iscript\in\inset (H)$ is finite, then $\iscript\circ\hscript _{(\alpha ,A)}$ is a Holevo instrument with states $\alpha _y$ and observable $B_{(x,y)}=\iscript _x^*(A_y)$.
{\rm{(iii)}}\enspace If $\iscript\in\inset (H)$ is finite, then $\hscript _{(\alpha ,A)}\circ\iscript$ is a Holevo instrument with states
$\iscript _y(\alpha _x)^\sim$ where $\rmtr\sqbrac{\iscript _y(\alpha _x)}\ne 0$ and observable
$B_{(x,y)}=\rmtr\sqbrac{\iscript _y(\alpha _x)}A_x$.
{\rm{(iv)}}\enspace $\hscript _{(\beta ,B)}\circ\hscript _{(\alpha ,A)}$ is a finite Holevo instrument with states $\alpha _y$ and observable
$C _{(x,y)}=\rmtr (\beta _xA_y)B_x$.
{\rm{(v)}}\enspace $\paren{B\mid\hscript _{(\alpha ,A)}\mid A}_y=\sum\limits _x\rmtr (\alpha _xB_y)A_x$.
\end{thm}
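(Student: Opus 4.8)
The plan is to verify each part by unwinding the defining formula $\sqbrac{\hscript _{(\alpha ,A)}}_x(\rho )=\rmtr (\rho A_x)\alpha _x$ and recognizing the resulting operations as finite Holevo instruments, reading off the states and effects by inspection. Throughout I would use the duality $\rmtr\sqbrac{\iscript _x(\rho )c}=\rmtr\sqbrac{\rho\,\iscript _x^*(c)}$ together with the sequential-product rule $(\iscript\circ\jscript )_{(x,y)}(\rho )=\jscript _y\paren{\iscript _x(\rho )}$.

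For (i) I would imitate Theorem~\ref{thm35}(i) in the finite setting: compute $\rmtr\sqbrac{\rho (\hscript _{(\alpha ,A)}^*)_x(a)}=\rmtr\sqbrac{\sqbrac{\hscript _{(\alpha ,A)}}_x(\rho )a}$, substitute the definition, and factor the scalars to read off $(\hscript _{(\alpha ,A)}^*)_x(a)=\rmtr (\alpha _xa)A_x$. Putting $a=I$ and using $\rmtr (\alpha _x)=1$ then yields $\hscripthat _{(\alpha ,A)}=A$.

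For (ii) I would apply the sequential-product rule, so that the inner trace $\rmtr\sqbrac{\iscript _x(\rho )A_y}$ becomes $\rmtr\sqbrac{\rho\,\iscript _x^*(A_y)}$, exhibiting $\iscript\circ\hscript _{(\alpha ,A)}$ in Holevo form with state $\alpha _y$ and effect $B_{(x,y)}=\iscript _x^*(A_y)$; that $B$ is an observable follows by summing, since $\sum _y\iscript _x^*(A_y)=\iscript _x^*(I)$ and then $\sum _x\iscript _x^*(I)=\iscriptbar\,^*(I)=I$ because $\iscriptbar$ is a channel. Part (iv) is then the special case $\iscript =\hscript _{(\beta ,B)}$: by (i) the effect becomes $B_{(x,y)}=\rmtr (\beta _xA_y)B_x$, and $\sum _{x,y}\rmtr (\beta _xA_y)B_x=\sum _x\rmtr (\beta _xI)B_x=\sum _xB_x=I$ using $\sum _yA_y=I$ and $\rmtr (\beta _x)=1$.

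Part (iii) carries the one real subtlety. Here $\paren{\hscript _{(\alpha ,A)}\circ\iscript}_{(x,y)}(\rho )=\iscript _y\paren{\rmtr (\rho A_x)\alpha _x}=\rmtr (\rho A_x)\iscript _y(\alpha _x)$, and the positive operator $\iscript _y(\alpha _x)$ is not normalized. The plan is to factor it as $\rmtr\sqbrac{\iscript _y(\alpha _x)}\,\iscript _y(\alpha _x)^\sim$, which puts the component into Holevo form with state $\iscript _y(\alpha _x)^\sim$ and effect $B_{(x,y)}=\rmtr\sqbrac{\iscript _y(\alpha _x)}A_x$. The obstacle is that this normalization is legitimate only when $\rmtr\sqbrac{\iscript _y(\alpha _x)}\ne 0$, exactly the restriction in the statement; for the remaining indices positivity forces $\iscript _y(\alpha _x)=0$, so that component vanishes and may be discarded. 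The observable check $\sum _{x,y}B_{(x,y)}=\sum _x\rmtr\sqbrac{\,\iscriptbar (\alpha _x)}A_x=\sum _xA_x=I$ uses that $\iscriptbar$ is a channel, so $\rmtr\sqbrac{\,\iscriptbar (\alpha _x)}=\rmtr (\alpha _x)=1$. Finally, for (v) I would use the definition $\paren{B\mid\hscript _{(\alpha ,A)}\mid A}_y=\hscriptbar _{(\alpha ,A)}^*(B_y)$, expand the dual of the finite sum $\hscriptbar _{(\alpha ,A)}=\sum _x\sqbrac{\hscript _{(\alpha ,A)}}_x$ by additivity of the dual (Corollary~\ref{cor13}), and substitute (i) to obtain $\sum _x\rmtr (\alpha _xB_y)A_x$.
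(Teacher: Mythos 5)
Your proposal is correct and follows essentially the same route as the paper's proof: unwind the definition of $\sqbrac{\hscript _{(\alpha ,A)}}_x$, use the duality $\rmtr\sqbrac{\iscript _x(\rho )c}=\rmtr\sqbrac{\rho\,\iscript _x^*(c)}$ to read off the Holevo form of each composite, and verify the effects sum to $I$ via the channel property of $\iscriptbar$. Your explicit handling of the $\rmtr\sqbrac{\iscript _y(\alpha _x)}=0$ case in (iii) (positivity forces the component to vanish) is a small point the paper leaves implicit, but it is not a different argument.
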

\begin{proof}
(i)\enspace For every $\rho\in\sscript (H)$ $x\in\Omega _A$, $a\in\escript (H)$ we have
\begin{align*}
\rmtr\sqbrac{\rho (\hscript _{(\alpha ,A)}^*)_x(a)}&=\rmtr\sqbrac{(\hscript _{(\alpha ,A)})_x(\rho )a}=\rmtr\sqbrac{\rmtr (\rho A_x)\alpha _xa}\\
   &=\rmtr (\rho A_x)\rmtr (\alpha _xa)=\rmtr\sqbrac{\rho\rmtr (\alpha _xa)A_x}
\end{align*}
Hence, $(\hscript _{(\alpha ,A)}^*)_x(a)=\rmtr (\alpha _xa)A_x$. Moreover,
\begin{equation*}
(\,\hscripthat _{(\alpha ,A)})_x=\rmtr (\hscript ^*_{(\alpha ,A)})_x(I)=A_x
\end{equation*}
so $\hscripthat _{(\alpha ,A)}=A$.
(ii)\enspace For all $x\in\Omega _\iscript$, $y\in\Omega _A$, $\rho\in\sscript (H)$ we obtain
\begin{align*}
(\iscript\circ\hscript _{(\alpha ,A)})_{(x,y)}(\rho )&=\iscript _x\circ (\hscript _{(\alpha ,A)})_y(\rho )
   =(\hscript _{(\alpha ,A)})_y\paren{\iscript _x(\rho )}\\
   &=\rmtr\sqbrac{\iscript _x(\rho )A_y}\alpha _y=\rmtr\sqbrac{\rho\paren{\iscript _x^*(A_y)}}\alpha _y
\end{align*}
Notice that $B_{(x,y)}=\iscript _x^*(A_y)$ is an observable because $\iscript _x^*(A_y)\in\escript (H)$ and
\begin{equation*}
\sum _{x,y}B_{(x,y)}=\sum _{x,y}\iscript _x^*(A_y)=\sum _x\iscript _x^*\paren{\sum _yA_y}=\sum _x\iscript\,^*_x(I)
   =\iscript (\Omega )^*(I)=I
\end{equation*}
(iii)\enspace For all $x\in\Omega _A$, $y\in\Omega _\iscript$, $\rho\in\sscript (H)$ we obtain
\begin{align*}
(\hscript _{(\alpha ,A)}\circ\iscript )_{(x,y)}(\rho )&=(\hscript _{(\alpha ,A)})_x\circ\iscript _y(\rho )
   =\iscript _y\sqbrac{(\hscript _{(\alpha ,A)})_x(\rho )}\\
   &=\rmtr (\rho A_x)\iscript _y(\alpha _x)=\rmtr\sqbrac{\rho\rmtr\paren{\iscript _y(\alpha _x)}A_x}\iscript _y(\alpha _x)^\sim
\end{align*}
Notice that $B_{(x,y)}=\rmtr\sqbrac{\iscript _y(\alpha _x)}A_x$ is an observable because
$\rmtr\sqbrac{\iscript _y(\alpha _x)}A_x\in\escript (H)$ and
\begin{align*}
\sum _{x,y}B_{(x,y)}&=\sum _{x,y}\rmtr\sqbrac{\iscript _y(\alpha _x)}A_x=\sum _x\rmtr\sqbrac{\sum _y\iscript _y(x)}A_x
   =\sum _x\rmtr\sqbrac{\,\iscriptbar (\alpha _x)}A_x\\
   &=\sum _xA_x=I
\end{align*}
(iv)\enspace Applying (ii) we obtain for all $x\in\Omega _B$, $y\in\Omega _A$, $\rho\in\sscript (H)$ that
\begin{align*}
(\hscript _{(\beta ,B)}\circ\hscript _{(\alpha ,A)})_{(x,y)}(\rho )&=\rmtr\sqbrac{\rho (\hscript _{(\beta ,B)}^*)_xA_y}\alpha _y
   =\rmtr\sqbrac{\rho\rmtr (\beta _xA_y)B_x}\alpha _y\\
   &=\rmtr (\beta _xA_y)\rmtr (\rho B_x)\alpha _y=\rmtr\sqbrac{\rho\rmtr (\beta _xA_y)B_x}\alpha _y
\end{align*}
Notice that $C_{(x,y)}=\rmtr (\beta _xA_y)B_x$ is an observable because $\rmtr (\beta _xA_y)B_x\in\escript (H)$ and
\begin{align*}
\sum _{x,y}C_{(x,y)}&=\sum _{x,y}\rmtr (\beta _xA_y)B_x=\sum _x\rmtr\paren{\beta _x\sum _yA_y}B_x
   =\sum _x\rmtr (\beta _xI)B_x\\
   &=\sum _xB_x=I
\end{align*}
(v)\enspace For all $y\in\Omega _B$, applying (i) we obtain
\begin{equation*}
\paren{B\mid\hscript _{(\alpha ,A)}\mid A}_y=\hscriptbar _{(\alpha ,A)}^*(B_y)=\sum _x\rmtr (\alpha _xB_y)A_x\qedhere
\end{equation*}
\end{proof}

If $A=\brac{A_x\colon x\in\Omega _A}$ is a finite observable, we define the corresponding \textit{L\"uders instrument} \cite{kra83,lud51} with outcome space $\Omega _A$ by
\begin{equation*}
\lscript _x^A(\rho )=A_x^{1/2}\rho A_x^{1/2}
\end{equation*}
for all $x\in\Omega _A$. We then have for all $\Delta\subseteq\Omega _A$ that
\begin{equation*}
\lscript ^A(\Delta )=\sum\brac{A_x^{1/2}\rho A_x^{1/2}\colon x\in\Delta}
\end{equation*}
We now generalize Theorem~\ref{thm21} to instruments.

\begin{thm}    
\label{thm45}
Let $A,B\in\obset (H)$, $\jscript\in\inset (H)$ be finite.
{\rm{(i)}}\enspace $(\lscript ^A)_x^*(a)=A_x^{1/2}aA_x^{1/2}$ and $(\lscript ^A)^\wedge =A$ so $\lscript ^A$ measures $A$.
{\rm{(ii)}}\enspace $(\lscript ^A\circ\jscript )_{(x,y)}^\wedge =A_x^{1/2}\jscripthat _yA_x^{1/2}$ for all $x\in\Omega _A$, $y\in\Omega _\jscript$.
{\rm{(iii)}}\enspace $(\jscript\circ\lscript ^A)_{(y,x)}^\wedge =\jscript _y^*(A_x)$ for all $x\in\Omega _X$, $y\in\Omega _\jscript$.
{\rm{(iv)}}\enspace $(\jscript\mid\lscript ^A)_y^\wedge =\sum\limits _{x\in\Omega _A}A_x^{1/2}\jscripthat _yA_x^{1/2}$.
{\rm{(v)}}\enspace $(\lscript ^A\mid\jscript )_x^\wedge =\jscriptbar\,^*(A_x)$.
{\rm{(vi)}}\enspace $\paren{A\sqbrac{\lscript ^A}B}_{(x,y)}=A_x^{1/2}B_yA_x^{1/2}$.
{\rm{(vii)}}\enspace $\paren{B\mid\lscript ^A\mid A}_y=\sum\limits _{x\in\Omega _A}A_x^{1/2}B_yA_x^{1/2}$.
\end{thm}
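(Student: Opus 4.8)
The plan is to establish (i) by a single direct Kraus computation and then to obtain every remaining part as a mechanical consequence of (i) together with the composition, additivity, and conditioning identities already proved. For (i), observe that the outcome operation $\lscript^A_x(\rho)=A_x^{1/2}\rho A_x^{1/2}$ has a single-operator Kraus decomposition with Kraus operator $C=A_x^{1/2}$, which is self-adjoint since $A_x\ge 0$. Theorem~\ref{thm11} then gives $(\lscript^A_x)^*(a)=C^*aC=A_x^{1/2}aA_x^{1/2}$, and setting $a=I$ yields $(\lscript^A_x)^\wedge=(\lscript^A_x)^*(I)=A_x$, so $(\lscript^A)^\wedge=A$. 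I would also note that $\overline{\lscript^A}=\sum_x\lscript^A_x$ is trace-preserving, since $\rmtr\sqbrac{\sum_x A_x^{1/2}\rho A_x^{1/2}}=\rmtr\sqbrac{\paren{\sum_x A_x}\rho}=\rmtr(\rho)$, which confirms that $\lscript^A$ is genuinely an instrument. This step is the instrument analogue of Theorem~\ref{thm21}(i) and carries essentially all the content of the theorem.

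For (ii) and (iii) I would combine (i) with the outcome-wise form of Theorem~\ref{thm12}(iv), using the fact recorded just before Corollary~\ref{cor42} that $(\iscript\circ\jscript)_{(x,y)}=\iscript_x\circ\jscript_y$. Thus $(\lscript^A\circ\jscript)_{(x,y)}^\wedge=(\lscript^A_x\circ\jscript_y)^\wedge=(\lscript^A_x)^*(\jscripthat_y)=A_x^{1/2}\jscripthat_y A_x^{1/2}$, which is (ii); and reversing the order of composition, $(\jscript\circ\lscript^A)_{(y,x)}^\wedge=\jscript_y^*\sqbrac{(\lscript^A_x)^\wedge}=\jscript_y^*(A_x)$, which is (iii). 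The only delicate point is the index placement: the Lüders factor sits on the left in (ii) but on the right in (iii), so the pair $(x,y)$ in one statement becomes $(y,x)$ in the other.

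For (iv) and (v) I would unfold the definition $(\jscript\mid\iscript)_y(\rho)=\jscript_y\sqbrac{\,\iscriptbar(\rho)}$. In (iv), since $\overline{\lscript^A}(\rho)=\sum_x A_x^{1/2}\rho A_x^{1/2}$, linearity of $\jscript_y$ gives $(\jscript\mid\lscript^A)_y=\sum_x(\lscript^A\circ\jscript)_{(x,y)}$; applying the additivity of the $\wedge$ map (Corollary~\ref{cor13}) and then (ii) yields $(\jscript\mid\lscript^A)_y^\wedge=\sum_x A_x^{1/2}\jscripthat_y A_x^{1/2}$. For (v), I compute $(\lscript^A\mid\jscript)_x(\rho)=A_x^{1/2}\,\jscriptbar(\rho)A_x^{1/2}$, so by cyclicity of the trace and the duality of the channel $\jscriptbar$ (Theorem~\ref{thm11}) one has $\rmtr\sqbrac{(\lscript^A\mid\jscript)_x(\rho)}=\rmtr\sqbrac{\jscriptbar(\rho)A_x}=\rmtr\sqbrac{\rho\,\jscriptbar\,^*(A_x)}$, whence $(\lscript^A\mid\jscript)_x^\wedge=\jscriptbar\,^*(A_x)$.

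Finally, (vi) is immediate from the definition $\paren{A\sqbrac{\iscript}B}_{(x,y)}=\iscript_x^*(B_y)$ (also recorded just before Corollary~\ref{cor42}) together with (i): $\paren{A\sqbrac{\lscript^A}B}_{(x,y)}=(\lscript^A_x)^*(B_y)=A_x^{1/2}B_y A_x^{1/2}$. For (vii) I would invoke Lemma~\ref{lem33}, which in the finite case reads $\paren{B\mid\lscript^A\mid A}_y=\sum_x\paren{A\sqbrac{\lscript^A}B}_{(x,y)}$, and then sum the formula from (vi) over $x$ to obtain $\sum_x A_x^{1/2}B_y A_x^{1/2}$; this is consistent with (iv) via Lemma~\ref{lem32}(ii), which identifies $(\jscript\mid\lscript^A)^\wedge$ with $\paren{B\mid\lscript^A\mid A}$ when $\jscripthat=B$. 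I expect no genuine obstacle: all the substance lies in the one-line dual computation of (i), and the remaining six items are bookkeeping applications of the earlier identities, the sole place demanding care being the ordering of the outcome indices in (ii) versus (iii).
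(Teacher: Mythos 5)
Your proposal is correct and follows essentially the same route as the paper: part (i) is established by a dual computation (your Kraus-operator argument and the paper's trace-cyclicity argument are two faces of Theorem~\ref{thm11}), and parts (ii)--(vii) are then read off from Theorem~\ref{thm12}(iv), the outcome-wise identities $(\iscript\circ\jscript)_{(x,y)}=\iscript_x\circ\jscript_y$ and $\paren{A\sqbrac{\iscript}B}_{(x,y)}=\iscript_x^*(B_y)$, and Lemmas~\ref{lem32} and~\ref{lem33}, exactly as in the paper. Your minor variations — using additivity (Corollary~\ref{cor13}) plus (ii) for (iv), and a direct trace computation for (v) — are immaterial restatements of the paper's one-line applications of the channel dual $(\,\lscriptbar^A)^*$ and $\jscriptbar\,^*$.
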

\begin{proof}
(i)\enspace For all $x\in\Omega _A$, $a\in\escript (H)$, $\rho\in\sscript (H)$ we have
\begin{equation*}
\rmtr\sqbrac{\rho (\lscript ^A)_x^*(a)}=\rmtr\sqbrac{\lscript _x^A(\rho )a}=\rmtr (A_x^{1/2}\rho A_x^{1/2}a)
=\rmtr\sqbrac{\rho A_x^{1/2}aA_x^{1/2}}
\end{equation*}
Hence, $(\lscript ^A)_x^*(a)=A_x^{1/2}aA_x^{1/2}$. Moreover,
\begin{equation*}
(\lscript ^A)_x^\wedge =(\lscript _x^A)^*(I)=A_x^{1/2}IA_x^{1/2}=A_x
\end{equation*}
Therefore, $(\lscript ^A)^\wedge =A$ so $\lscript ^A$ measures $A$.
(ii)\enspace By Theorem~\ref{thm21}(iii) we obtain
\begin{equation*}
(\lscript ^A\circ\jscript )_{(x,y)}^\wedge =(\lscript ^A)_x^*(\,\jscripthat _y)=A^{1/2}\jscripthat _yA_x^{1/2}
\end{equation*}
(iii)\enspace Applying Theorem~\ref{thm21} we have the following
\begin{equation*}
(\jscript\circ\lscript ^A)_{(y,x)}^\wedge =\jscript _y^*(\,\lscripthat _x^A)=\jscript _y^*(A_x)
\end{equation*}
(iv)\enspace $(\jscript\mid\lscript ^A)_y^A=(\,\lscriptbar ^A)^*(\,\jscripthat _y)=\sum\limits _{x\in\Omega _A}A_x^{1/2}\jscripthat _yA_x^{1/2}$
\newline
(v)\enspace $(\lscript ^A\mid\jscript )_x^\wedge =\jscripthat ^*(\lscript ^A)_x^\wedge
=\sum\limits _{y\in\Omega _\jscript}\jscript _y^*(A_x)$.\newline
(vi)\enspace $\paren{A\sqbrac{\lscript ^A}B}_{(x,y)}=(\lscript _x^A)^*(B_y)=A_x^{1/2}B_yA_x^{1/2}$.\newline
(vii)\enspace $(B\mid\lscript ^A\mid A)_y=\paren{A\sqbrac{\lscript ^A}B}(\Omega _A\times\brac{y})
=\sum\limits _{x\in\Omega _A}A_x^{1/2}B_yA_x^{1/2}$.
\end{proof}

\begin{cor}    
\label{cor46}
Let $A,B\in\oscript (H)$, $\jscript\in\inset (H)$ be finite and let $\Delta\subseteq\Omega _A$, $\Gamma\subseteq\Omega _B$.
{\rm{(i)}}\enspace $(\lscript ^A)^*(\Delta )(a)=\sum\limits _{x\in\Delta}(A_x\square a)$.
{\rm{(ii)}}\enspace $(\lscript ^A\circ\jscript )^\wedge (\Delta\times\Gamma )=\sum\limits _{x\in\Delta}\sqbrac{A_x\square\jscripthat (\Gamma )}$.
{\rm{(iii)}}\enspace $(\jscript\circ\lscript ^A)^\wedge (\Gamma\times\Delta )=\jscript ^*(\Gamma )\sqbrac{A(\Delta )}$.\newline
{\rm{(iv)}}\enspace $(\jscript\mid\lscript ^A)^\wedge (\Gamma )=\sum\limits _{x\in\Omega _A}(A_x\square\jscripthat (\Gamma ))$
{\rm{(v)}}\enspace $(\lscript ^A\mid\jscript )^\wedge (\Delta )=\jscriptbar\,^*\sqbrac{A(\Delta )}$.
{\rm{(vi)}}\enspace $\paren{A\sqbrac{\lscript ^A}B}(\Delta\times\Gamma )=\sum\limits _{x\in\Delta}\sqbrac{A_x\square B(\Gamma )}$.
\newline
{\rm{(vii)}}\enspace $(B\mid\lscript ^A\mid A)(\Gamma )=\sum\limits _{x\in\Omega _A}\sqbrac{A_x\square B(\Gamma )}$.
\end{cor}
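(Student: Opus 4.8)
The plan is to obtain each of the seven set-level identities as the finite sum of the corresponding outcome-level formula established in Theorem~\ref{thm45}. The engine in every case is finite additivity: a finite instrument satisfies $\iscript (\Delta )=\sum _{x\in\Delta}\iscript _x$, a finite observable satisfies $A(\Delta )=\sum _{x\in\Delta}A_x$, the dual is additive by Corollary~\ref{cor13} (so $\iscript ^*(\Delta )=\sum _{x\in\Delta}\iscript _x^*$), and the measured-effect map $\wedge$ is additive by Theorem~\ref{thm12}(ii) and Corollary~\ref{cor13}. Combined with the linearity of each dual operation and the definition $a\square b=a^{1/2}ba^{1/2}$, these reduce every part to a bookkeeping computation.

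For the one-index parts (i), (iii), (v) I would simply sum. For (i), $(\lscript ^A)^*(\Delta )(a)=\sum _{x\in\Delta}(\lscript ^A)_x^*(a)=\sum _{x\in\Delta}A_x^{1/2}aA_x^{1/2}=\sum _{x\in\Delta}(A_x\square a)$ by Theorem~\ref{thm45}(i). For (iii), summing Theorem~\ref{thm45}(iii) over the rectangle and pulling the $x$-sum inside the linear dual gives
\[
(\jscript\circ\lscript ^A)^\wedge (\Gamma\times\Delta )=\sum _{y\in\Gamma ,\,x\in\Delta}\jscript _y^*(A_x)=\sum _{y\in\Gamma}\jscript _y^*\paren{\sum _{x\in\Delta}A_x}=\jscript ^*(\Gamma )\sqbrac{A(\Delta )}.
\]
Part (v) is identical in spirit, summing Theorem~\ref{thm45}(v) over $x\in\Delta$ and using additivity of $\jscriptbar\,^*$.

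The product-on-a-rectangle parts (ii), (vi) and the conditioned parts (iv), (vii) all hinge on the same factoring step: once the double sum is formed, $A_x^{1/2}(\cdot )A_x^{1/2}$ is linear in its middle argument, so the inner $y$-sum may be collapsed, e.g.
\[
\sum _{x,\,y}A_x^{1/2}\jscripthat _yA_x^{1/2}=\sum _xA_x^{1/2}\paren{\sum _y\jscripthat _y}A_x^{1/2}.
\]
For (ii) and (vi) the $y$-sum runs over $\Gamma$, recognizing $\sum _{y\in\Gamma}\jscripthat _y=\jscripthat (\Gamma )$ and $\sum _{y\in\Gamma}B_y=B(\Gamma )$ respectively, while the $x$-sum stays on $\Delta$, producing $\sum _{x\in\Delta}(A_x\square\jscripthat (\Gamma ))$ and $\sum _{x\in\Delta}(A_x\square B(\Gamma ))$. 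For the conditioned parts (iv) and (vii) the distinction to watch is that conditioning by $A$ uses $\iscriptbar =\lscript ^A(\Omega _A)$, so after collapsing the $y$-sum over $\Gamma$ the $x$-sum runs over all of $\Omega _A$, giving $\sum _{x\in\Omega _A}(A_x\square\jscripthat (\Gamma ))$ and $\sum _{x\in\Omega _A}(A_x\square B(\Gamma ))$. The only real care needed is this interchange of finite sums with the maps $A_x^{1/2}(\cdot )A_x^{1/2}$ and $\jscript _y^*$ and with the additive set functions $A(\cdot )$, $B(\cdot )$, $\jscripthat (\cdot )$; since all of these are linear and the index sets are finite, there is no analytic obstacle and the proof is purely algebraic.
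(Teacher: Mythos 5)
Your proposal is correct and matches the paper's intent exactly: the paper states Corollary~\ref{cor46} without proof, as an immediate consequence of Theorem~\ref{thm45}, and the implicit argument is precisely your summation over outcomes using finite additivity of $\iscript(\cdot)$, $A(\cdot)$, $\jscripthat(\cdot)$ and the linearity of the dual maps $A_x^{1/2}(\cdot)A_x^{1/2}$ and $\jscript_y^*$. Nothing is missing; your care in distinguishing the $x$-sum over $\Delta$ in (ii), (vi) from the $x$-sum over all of $\Omega_A$ in the conditioned parts (iv), (vii) is exactly the right point to attend to.
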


We close by stating that a L\"uders instrument is repeatable if and only if it is sharp \cite{hz12}.

\end{document}